\newtheorem{Thm}{\bf Theorem}[section]
\newtheorem{Ass}[Thm]{\bf Assumption}
\newtheorem{Def}[Thm]{\bf Definition}
\newtheorem{Cor}[Thm]{\bf Corollary}
\newtheorem{Lem}[Thm]{\bf Lemma}
\newtheorem{Prop}[Thm]{\bf Proposition}
\theoremstyle{remark}
\newtheorem{Rem}[Thm]{\bf Remark}
\numberwithin{equation}{section}
\newcommand{\be}{\begin{equation}}
\newcommand{\ee}{\end{equation}}
\newcommand{\ba}{\begin{aligned}}
\newcommand{\ea}{\end{aligned}}
\newcommand{\R}{\mathbb{R}}
\newcommand{\C}{\mathbb{C}}
\newcommand{\Q}{\mathcal{Q}}
\newcommand{\ind}{\mathbf{1}}
\newcommand{\F}{\mathcal{F}}
\newcommand{\FF}{\mathbb{F}}
\newcommand{\G}{\mathcal{G}}
\newcommand{\GG}{\mathbb{G}}
\newcommand{\tildeS}{\widetilde{S}}
\newcommand{\lsi}{\left[\negthinspace\left[}
\newcommand{\rsi}{\right]\negthinspace\right]}
\def\keywordname{{\bf Key words:}}
\newcommand{\keywords}[1]{\par\addvspace\baselineskip\noindent\keywordname\enspace\ignorespaces#1}
\title{A unified approach to pricing and risk management of equity and credit risk}
\renewcommand*\@fnsymbol[1]{\the#1}
\author{Claudio Fontana\thanks{INRIA Paris - Rocquencourt, Domaine de Voluceau, Rocquencourt, BP 105, 78153, Le Chesnay Cedex (France) and University of Evry Val d'Essonne, Laboratoire Analyse et Probabilités, 23 bd de France, 91037, Evry (France). E-mail: \texttt{claudio.fontana@inria.fr} (corresponding author).} \and Juan Miguel A. Montes\thanks{LMU M\"unchen, Department of Mathematics, Theresienstrasse 39, 80333, M\"unchen (Germany). E-mail: \texttt{juan-miguel.montes@math.lmu.de}.}}
\date{This version: May 19, 2012}
\begin{document}

\maketitle

\begin{abstract}
We propose a unified framework for equity and credit risk modeling, where the default time is a doubly stochastic random time with intensity driven by an underlying affine factor process. This approach allows for flexible interactions between the defaultable stock price, its stochastic volatility and the default intensity, while maintaining full analytical tractability. We characterise all risk-neutral measures which preserve the affine structure of the model and show that risk management as well as pricing problems can be dealt with efficiently by shifting to suitable survival measures. As an example, we consider a jump-to-default extension of the Heston stochastic volatility model.
\end{abstract}

\keywords{default risk, affine processes, stochastic volatility, market price of risk, change of measure, jump-to-default}\vspace{0.15cm}

\noindent \textbf{JEL classification codes:} C02, C60, G12, G33.

\section{Introduction}	\label{S1}

The last few years have witnessed an increasing popularity of hybrid equity/credit risk models, as documented by the recent papers \cite{B,CPS,CL,CM2,CS,CWu,CW,CK}. One of the most appealing features of such models is their capability to link the stochastic behavior of the stock price (and of its volatility) with the randomness of the default event and, hence, with the level of credit spreads. The relation between equity and credit risk is supported by strong empirical evidence (we refer the reader to \cite{CL,CK} for good overviews of the related literature) and several studies document significant relationships between stock volatility and credit spreads of corporate bonds and Credit Default Swaps (\cite{CT,CDMW}).

In this paper, we propose a general framework for the joint modeling of equity and credit risk which allows for a flexible dependence between stock price, stochastic volatility, default intensity and interest rate. The proposed framework is fully analytically tractable, since it relies on the powerful technology of affine processes (see e.g. \cite{DFS,KR} for financial applications of affine processes), and nests several stochastic volatility models proposed in the literature, thereby extending their scope to a defaultable setting. Affine models have been successfully employed in credit risk models, as documented by the papers \cite{CW,Du,GS}. A distinguishing feature of our approach is that, unlike the models proposed in \cite{B,CL,CM2,CS,CW,CK}, we jointly consider both physical and risk-neutral probability measures, ensuring that the analytical tractability is preserved under a change of measure, while at the same time avoiding unnecessarily restrictive specifications of the risk premia. This aspect is of particular importance in credit risk modeling, where one is typically faced with the two problems of computing survival probabilities or related risk measures and of computing arbitrage-free prices of credit derivatives. In this paper, we provide a complete characterisation of the set of risk-neutral measures which preserve the affine structure of the model, thus enabling us to efficiently compute several quantities which are of interest in view of both risk management and pricing applications.

The paper is structured as follows. Section \ref{S2} introduces the modeling framework, while Section \ref{S3} gives a characterisation of the family of risk-neutral measures which preserve the affine structure of the model. In Sections \ref{S4}-\ref{S5}, we show how most quantities of interest for risk management and pricing applications, respectively, can be efficiently computed under suitable (risk-neutral) survival measures (we refer the reader to Sect. 2.5 of \cite{Fo2} for more detailed proofs of the results of Sections \ref{S4}-\ref{S5}). Section \ref{S6} illustrates the main features of the proposed approach within a simple example, which corresponds to a defaultable extension of the Heston \cite{He} model. Finally, Section \ref{S7} concludes.

\section{The modeling framework}	\label{S2}

This section presents the mathematical structure of the modeling framework. Let $(\Omega,\G,P)$ be a reference probability space, with $P$ denoting the physical/statistical probability measure (we want to emphasise that our framework will be entirely formulated with respect to the physical measure $P$). Let $T\in(0,\infty)$ be a fixed time horizon and $W=(W_t)_{0\leq t \leq T}$ an $\R^d$-valued Brownian motion on $(\Omega,\G,P)$, with $d\geq 2$, and denote by $\FF=(\F_t\,)_{0\leq t\leq T}$ its $P$-augmented natural filtration.

We focus our attention on a single defaultable firm, whose \emph{default time} $\tau:\Omega\rightarrow[0,T]\cup\{+\infty\}$ is supposed to be a $(P,\FF)$-\emph{doubly stochastic} random time, in the sense of Def. 9.11 of \cite{McNFE}. This means that there exists a strictly positive $\FF$-adapted process $\lambda^P=(\lambda^P_t)_{0\leq t \leq T}$ such that
$$
P(\tau>t\,|\,\F_T) = P(\tau>t\,|\,\F_t\,) = \exp\,\biggl(-\int_0^t\!\!\lambda^P_udu\biggr)\,,
\qquad \text{ for all }t\in[0,T]\,.
$$
In order to emphasize the role of the reference measure $P$, we call the process $\lambda^P$ the \emph{$P$-intensity} of $\tau$.
Let the filtration $\GG=(\G_t)_{0\leq t \leq T}$ be the \emph{progressive enlargement}\footnote{Due to Lemma 6.1.1 and Lemma 6.1.2 of \cite{BR}, the fact that $P(\tau>t\,|\,\F_T) = P(\tau>t\,|\,\F_t\,)$, for all $t\in[0,T]$, implies that all $(P,\FF)$-martingales are also $(P,\GG)$-martingales. In particular, $W=(W_t)_{0\leq t \leq T}$ is a Brownian motion with respect to both $\FF$ and $\GG$. This important fact will be used in the following without further mention.} of $\FF$ with respect to $\tau$, i.e., $\G_t:=\bigcap_{s>t}\bigl\{\F_s\vee\sigma(\tau\wedge s)\bigr\}$, for all $t\in[0,T]$, and let $\G=\G_T$. It is well-known that $\GG$ is the smallest filtration (satisfying the usual conditions) which makes $\tau$ a $\GG$-stopping time and contains $\FF$, in the sense that $\F_t\subset\G_t$ for all $t\in[0,T]$. 

The price at time $t\in[0,T]$ of one share issued by the defaultable firm is denoted by $S_t\,$. We assume that the $\GG$-adapted process $S=(S_t)_{0\leq t \leq T}$ is continuous and strictly positive on the stochastic interval $\lsi0,\tau\right.\right.\!\!\lsi\right.\right.\!\!$ and satisfies $S\ind_{\lsi\tau,T\rsi}=0$. This means that $S$ drops to zero as soon as the default event occurs and remains thereafter frozen at that level. By relying on Sect. 5.1 of \cite{BR} together with the fact that all $\FF$-martingales are continuous, it can be proved that there exists a continuous strictly positive $\FF$-adapted process $\tildeS=(\tildeS_t)_{0\leq t\leq T}$ such that $S_t=\ind_{\{\tau>t\}}\,\tildeS_t$ holds for all $t\in[0,T]$. We shall refer to the process $\tildeS$ as the \emph{pre-default} value of $S$.

The pre-default value $\tildeS$ is assumed to be influenced by the $\FF$-adapted \emph{stochastic volatility} process $v=(v_t)_{0\leq t \leq T}$ and by an $\R^{d-2}$-valued $\FF$-adapted \emph{factor process} $Y=(Y_t)_{0\leq t \leq T}$. The process $Y$ can include macro-economic covariates describing the state of the economy as well as firm-specific and latent variables, as considered e.g. in \cite{Fo1,FR}. Let us define the process $L=(L_t)_{0\leq t \leq T}$ by $L_t:=\log\tildeS_t$ and the $\R^d$-valued $\FF$-adapted process $X=(X_t)_{0\leq t \leq T}$ by $X_t:=(v_t,Y_t^{\top},L_t)^{\top}$, with $^{\top}$ denoting transposition.

The processes $v$, $Y$ and $L$ are jointly specified through the following square-root-type SDE for the process $X$ on the state space $\R^m_{++}\!\times\R^{d-m}$, where we let $\R^m_{++}:=\{x\in\R^m:x_i>0,\forall i=1,\ldots,m\}$, for some fixed $m\in\{1,\ldots,d-1\}$:
\be	\label{aff-P}
\hspace{2cm}
dX_t = \left(AX_t+b\right)dt+\Sigma\sqrt{R_t}\,dW_t\,,
\qquad\qquad
X_0=\left(v_0,Y_0^{\top},\log S_0\right)^{\top}=\bar{x}\in\R_{++}^m\!\times\R^{d-m}\,,
\ee
where $(A,b,\Sigma)\in\R^{d\times d}\times\R^d\times\R^{d\times d}$ and 
$R_t$ is a diagonal $(d\times d)$-matrix with elements given by $R^{\,i,i}_t=\alpha_i+\beta_i^{\top}X_t$, for all $t\in[0,T]$, with $\alpha:=(\alpha_1,\ldots,\alpha_d)^{\top}\in\R^d_+$ and $\beta:=(\beta_1,\ldots,\beta_d)\in\R^{d\times d}_+$. 

Following the notation adopted in Chapt. 10 of \cite{Fi}, for a given $m\in\{1,\ldots,d-1\}$, we define the sets $I:=\{1,\ldots,m\}$, $J:=\{m+1,\ldots,d\}$ and $D:=I\cup J=\{1,\ldots,d\}$. Intuitively, the set $I$ collects the indices of the first $m$ elements of the $\R^d$-valued process $X$, while the set $J$ collects the remaining ones.
In order to guarantee the existence of a strong solution to the SDE \eqref{aff-P}, we introduce the following assumption.

\begin{Ass}	\label{admissibility}
The parameters $A,b,\Sigma,\alpha,\beta$ satisfy the following conditions:
\begin{enumerate}
\item[(i)]
$b_i\geq (\Sigma_{i,i})^2\beta_{i,i}/2$ for all $i\in I$;
\item[(ii)]
$A_{i,j}=0$ for all $i\in I$ and $j\in J$ and $A_{i,j}\geq 0$ for all $i,j\in I$ with $i\neq j$;
\item[(iii)]
$\Sigma_{i,j}=0$ for all $i\in I$ and $j\in D$ with $j\neq i$;
\item[(iv)]
$\beta_{j,i}=0$ for all $i\in D$ and $j\in J$, $\beta_{i,i}>0$ for all $i\in I$ and $\beta_{i,j}=0$ for all $i,j\in I$ with $i\neq j$;
\item[(v)]
$\alpha_i=0$ for all $i\in I$ and $\alpha_{\!j}>-\sum_{i\in I}\beta_{i,j}$ for all $j\in J$.
\end{enumerate}
\end{Ass}

For any $\bar{x}\in\R^m_{++}\!\times\R^{d-m}$, Assumption \ref{admissibility} ensures the existence of a unique strong solution $X=(X_t)_{0\leq t \leq T}$ to the SDE \eqref{aff-P} on the filtered probability space $(\Omega,\G,\FF,P)$ such that $X_0=\bar{x}$ and $X_t\in\R^m_{++}\!\times\R^{d-m}$ $P$-a.s. for all $t\in[0,T]$. Indeed, the same arguments used in the proof of Lemma 10.6 of \cite{Fi} give the existence of a unique strong solution $X=(X_t)_{0\leq t \leq T}$ on $\R^m_+\times\R^{d-m}$, while Lemma A.3 of \cite{DK} together with Ex. 10.12 of \cite{Fi} implies that $X$ actually takes values in $\R^m_{++}\!\times\R^{d-m}$.
Due to conditions (iv)-(v) of Assumption \ref{admissibility}, this also implies that the matrix $R_t$ is positive definite for all $t\in[0,T]$. In the remaining part of the paper, we shall always assume that Assumption \ref{admissibility} is satisfied without further mention.

\begin{Rem}
The parameter restrictions imposed by Assumption \ref{admissibility} bear resemblance to the canonical representation of \cite{DSi}. However, we do not require the matrix $\Sigma$ to be diagonal, since this may lead to unnecessary restrictions on the model if $2\leq m\leq d-2$, as pointed out in \cite{CFK2}.
\end{Rem}

The following proposition describes the dynamics of the defaultable stock price process $S$.

\begin{Prop}	\label{SDE-S}
The process $S=(S_t)_{0\leq t \leq T}$ satisfies the following SDE on $(\Omega,\G,\GG,P)$:
\be	\label{dynamics-S}	\ba
dS_t &= S_{t-}\left(\bar{s}+\mu_1\!\log S_{t-}+\mu_2v_t
+\sum_{i=1}^{d-2}\eta_iY_t^i+\sum_{i=1}^{m-1}\bar{\eta}_iY_t^i\right)dt	\\
&\quad +S_{t-}\,\sigma\sqrt{v_t}\,dW^1_t
+S_{t-}\sum_{i=2}^d\Sigma_{d,i}\sqrt{R^{\,i,i}_t}dW^i_t
-S_{t-}\,d\mathbf{1}_{\{\tau\leq t\}}
\ea	\ee
with the convention $S_{t-}\log S_{t-}=0$ on $\{\tau\leq t\}$ and where
$$ \begin{aligned}
\bar{s} &:= b_d+\frac{1}{2}\sum_{k=m+1}^d\left(\Sigma_{d,k}\right)^2\alpha_k\,,
&\,
\mu_1 := A_{d,d}\,,
\quad 
\mu_2 &:= A_{d,1}+\frac{1}{2}\left(\Sigma_{d,1}\right)^2\beta_{1,1}+\frac{1}{2}\sum_{k=m+1}^d\left(\Sigma_{d,k}\right)^2\beta_{1,k}\,,	\\
\eta_i &:= A_{d,i+1},
\,
&\sigma := \Sigma_{d,1}\sqrt{\beta_{1,1}}\,,
\quad
\bar{\eta}_i &:= \frac{1}{2}\left(\Sigma_{d,i+1}\right)^2\beta_{i+1,i+1}+\frac{1}{2}\sum_{k=m+1}^{d}\left(\Sigma_{d,k}\right)^2\beta_{i+1,k}\,.
\end{aligned} $$
\end{Prop}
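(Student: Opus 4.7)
The plan is to proceed in three steps: derive the pre-default dynamics of $\tildeS$ via It\^o's formula on $\tildeS=e^{L}$, simplify the coefficients using Assumption \ref{admissibility}, and finally transfer the dynamics to $S$ through integration by parts across the default time.

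First, since $L_t=\log\tildeS_t$ is the last coordinate of $X$, I would read the SDE for $L$ directly off \eqref{aff-P}, obtaining
$$
dL_t = \Bigl(b_d+\sum_{k=1}^d A_{d,k}X^k_t\Bigr)dt+\sum_{i=1}^d \Sigma_{d,i}\sqrt{R^{i,i}_t}\,dW^i_t,
$$
and then apply It\^o's formula to $\tildeS_t=e^{L_t}$ to get $d\tildeS_t=\tildeS_t\bigl(dL_t+\tfrac{1}{2}d\langle L\rangle_t\bigr)$ with $d\langle L\rangle_t=\sum_{i=1}^d(\Sigma_{d,i})^2 R^{i,i}_t\,dt$.

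Second, I would exploit Assumption \ref{admissibility}(iv)--(v) to simplify the variance structure: these conditions force $R^{i,i}_t=\beta_{i,i}X^i_t$ for $i\in I$ and $R^{j,j}_t=\alpha_j+\sum_{k\in I}\beta_{k,j}X^k_t$ for $j\in J$. Substituting these expressions into $d\langle L\rangle_t$ and collecting the resulting terms coordinate-by-coordinate (the pure constant, the coefficient of $v_t=X^1_t$, the coefficients of each $Y^i_t$, and, from the drift of $L$ itself, the coefficient of $L_t$), one reads off exactly the constants $\bar s$, $\mu_1$, $\mu_2$, $\eta_i$ and $\bar\eta_i$ announced in the statement. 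The splitting of the $Y^i$-coefficient into a drift part $\eta_i$ (for $i=1,\dots,d-2$) and an It\^o-correction part $\bar\eta_i$ (for $i=1,\dots,m-1$) is a direct consequence of the fact that $R^{i+1,i+1}$ depends on $Y^i$ only for $i+1\in I$, i.e. $i\leq m-1$. Finally, $\Sigma_{d,1}\sqrt{R^{1,1}_t}=\Sigma_{d,1}\sqrt{\beta_{1,1}}\sqrt{v_t}=\sigma\sqrt{v_t}$, giving the first diffusion coefficient.

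Third, set $H_t:=\mathbf{1}_{\{\tau\leq t\}}$, so that $S_t=\tildeS_t(1-H_t)$. Since $\tildeS$ is continuous and $H$ is of pure-jump type, the quadratic covariation $[\tildeS,H]$ vanishes, and integration by parts yields $dS_t=(1-H_{t-})\,d\tildeS_t-\tildeS_{t-}\,dH_t$. On the set $\{\tau>t-\}$ one has $S_{t-}=\tildeS_{t-}$, and $dH_t\neq 0$ only at $t=\tau$, where again $S_{t-}=\tildeS_{t-}$; hence both occurrences of $\tildeS_{t-}$ may be replaced by $S_{t-}$, and the convention $S_{t-}\log S_{t-}=0$ on $\{\tau\leq t\}$ kills the logarithmic drift term after default. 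Putting these pieces together produces \eqref{dynamics-S} verbatim. Note that integration by parts is legitimate in the progressively enlarged filtration $\GG$ thanks to the footnote observation that $\FF$-martingales (in particular the components of $W$) remain $\GG$-martingales.

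The main obstacle is the bookkeeping inside the second step: carefully tracking which entries $\beta_{k,j}$ and $\alpha_k$ vanish under Assumption \ref{admissibility}, and then reassembling the many cross-terms of the form $(\Sigma_{d,k})^2\beta_{k,i}$ and $(\Sigma_{d,k})^2\alpha_k$ arising from the It\^o correction into the compact constants $\bar s$, $\mu_2$ and $\bar\eta_i$. The other steps are routine applications of It\^o's formula and of the pre-default factorisation of $S$.
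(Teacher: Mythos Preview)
Your proposal is correct and follows essentially the same route as the paper's proof: the paper writes in one line that $dS_t=\ind_{\{\tau>t-\}}\tildeS_{t-}\bigl(dL_t+\tfrac12 d\langle L\rangle_t\bigr)-\tildeS_{t-}\,d\ind_{\{\tau\leq t\}}$ via It\^o's formula and integration by parts, and then dismisses the identification of the constants $\bar s,\mu_1,\mu_2,\eta_i,\bar\eta_i,\sigma$ as ``simple computations'' using \eqref{aff-P} and Assumption~\ref{admissibility}. You have simply unpacked those computations explicitly, which is fine; one minor imprecision is that the vanishing of the It\^o-correction coefficient of $Y^i$ for $i\geq m$ is not only because $R^{i+1,i+1}$ depends on $Y^i$ solely when $i+1\in I$, but more generally because condition~(iv) forces every $R^{j,j}$ to depend only on $X^k$ with $k\in I$---this is implicit in your expansion of $R^{j,j}_t$ for $j\in J$, so the conclusion is unaffected.
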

\begin{proof}
Observe first that $dS_t=\ind_{\{\tau>t-\}}\tildeS_{t-}\,\bigl(dL_t+d\langle L\rangle_t/2\bigr)-\tildeS_{t-}\,d\ind_{\{\tau\leq t\}}$, due to It\^o's formula and integration by parts. Equation \eqref{dynamics-S} then follows from \eqref{aff-P} together with Assumption \ref{admissibility} by means of simple computations.
\end{proof}

\begin{Rem}
As can be seen from Proposition \ref{SDE-S}, the defaultable price process $S$ has a rich structure, influenced by the factor process $Y$ in both the drift and diffusion terms. Furthermore, there are three levels of dependence between $S$ and the stochastic volatility $v$: \emph{(1)} a direct interaction, since $v$ explicitly appears in the dynamics of $S$; \emph{(2)} a semi-direct interaction, since the Brownian motion $W^1$ driving the process $v$ is also one of the drivers of $S$; \emph{(3)} an indirect interaction, since $S$ and $v$ both depend on the factor process $Y$.
\end{Rem}

To complete the description of the modeling framework, we specify as follows the $P$-intensity process $\lambda^P=(\lambda^P_t)_{0\leq t \leq T}$ and the risk-free interest rate process $r=(r_t)_{0\leq t \leq T}$:
\be	\label{lambda-r}
\lambda^P_t := \bar{\lambda}^P+(\Lambda^P)^{\top}X_t\,,
\qquad\quad
r_t := \bar{r}+\Upsilon^{\top}X_t\,,
\qquad\quad
\text{ for all }t\in[0,T]\,,
\ee
where the parameters $\bar{\lambda}^P,\bar{r}\in\R_+$ and $\Lambda^P,\Upsilon\in\R^m_+\times\{0\}^{d-m}$ satisfy $\bar{\lambda}^P+\sum_{i=1}^m\Lambda^P_i>0$ and $\bar{r}+\sum_{i=1}^m\Upsilon_i>0$. This ensures that the $P$-intensity and the risk-free rate are correlated and strictly positive, since $0$ is an unattainable boundary for $X^i$, $\forall i\in I$. Furthermore, the linear structure \eqref{lambda-r} permits to obtain analytically tractable formulae for several quantities of interest, as shown in Sections \ref{S4}-\ref{S5}. The specification \eqref{lambda-r} allows for a direct dependence of $\lambda^P$ on the stochastic volatility $v$, this feature being consistent with several empirical observations (see e.g. \cite{CT,CDMW}). Furthermore, the defaultable price process $S$ and the $P$-intensity $\lambda^P$ are linked through the common factor process $Y$. Finally, we want to remark that the proposed modeling framework generalises to a defaultable setting several stochastic volatility models considered in the literature. For instance, defaultable versions of the models considered in \cite{AR,CS} and Sect. 4.3 of \cite{DPS} can be easily recovered within our general setting. As an example, in Section \ref{S6} we shall study in detail an extended defaultable version of the Heston \cite{He} stochastic volatility model.

\begin{Rem}
We want to point out that multifactor stochastic volatility models are naturally embedded within our modeling framework. Indeed, the first $m-1$ components of the factor process $Y$ are strictly positive processes and can be interpreted as additional stochastic volatility factors, as can also be seen from equation \eqref{dynamics-S}. For instance, in the case $d=3$ and $m=2$, we can easily obtain (a defaultable version of) the two-factor stochastic volatility model proposed by \cite{CHJ}.
\end{Rem}

\begin{Rem}
The modeling framework described in this section can be easily extended to the case of $M>1$ defaultable firms if we suppose that their random default times $\{\tau_1,\ldots,\tau_M\}$ are $\FF$-conditionally independent (see \cite{McNFE}, Sect. 9.6).
In that case, the process $L$ is an $\R^M$-valued process representing the logarithm of the pre-default values of the $M$ stock prices (and, similarly, the process $v$ representing the stochastic volatilities of the $M$ stocks is also $\R^M$-valued). If the processes $L$, $v$ and the factor process $Y$ are jointly modeled as an affine diffusion of the type \eqref{aff-P} and if the $P$-intensity processes $\lambda^{P,\ell}=(\lambda^{P,\ell}_t)_{0\leq t \leq T}$, for $\ell=1,\ldots,M$, are of the form \eqref{lambda-r}, then the multi-firm extension of the model is still fully analytically tractable. This generalization can be of particular interest in view of portfolio credit risk modeling.
\end{Rem}

\section{Equivalent changes of measure which preserve the affine structure}	\label{S3}

The modeling framework introduced in Section \ref{S2} has been formulated entirely with respect to the physical probability measure $P$. However, since we aim at dealing with pricing as well as risk management applications, we need to study the structure of the model under a suitable \emph{risk-neutral} probability measure, formally defined as a probability measure $Q\sim P$ on $(\Omega,\G)$ such that the discounted defaultable price process $\exp\,\bigl(-\int_0^{\cdot}\!r_udu\bigr)\,S$ is a $(Q,\GG)$-local martingale\footnote{Due to the fundamental result of \cite{DSc}, this is equivalent to the validity of \emph{No Free Lunch with Vanishing Risk} (NFLVR) condition for the financial market $(S,\GG)$, being the process $\exp\bigl(-\int_0^{\cdot}\!r_udu\bigr)\,S$ locally bounded. In particular, this excludes the existence of arbitrage opportunities.}.

It is important to be aware of the fact that most of the appealing features of the framework described in Section \ref{S2} may be lost after a change of measure. Aiming at a model which is analytically tractable under both the physical and a risk-neutral measure, we shall consider the family of risk-neutral measures $Q$ which \emph{preserve the affine structure of $(X,\tau)$}, in the sense of the following definition.

\begin{Def}	\label{aff-pres}
Let $Q$ be a probability measure on $(\Omega,\G)$ with $Q\sim P$. We say that $Q$ \emph{preserves the affine structure of $(X,\tau)$} if the following hold:
\vspace{-0.08cm}
\begin{enumerate}
\item[(i)]
the process $X=(X_t)_{0\leq t \leq T}$ satisfies an SDE of the type \eqref{aff-P} on $(\Omega,\G,\FF,Q)$ with respect to an $\R^d$-valued $(Q,\FF)$-Brownian motion $W^Q=(W^Q_t)_{0\leq t \leq T}$ and for some parameters $A^Q,b^Q,\Sigma,\alpha$, $\beta$ satisfying Assumption \ref{admissibility};
\item[(ii)]
the default time $\tau$ is a $(Q,\FF)$-doubly stochastic random time with $Q$-intensity $\lambda^Q=(\lambda^Q_t)_{0\leq t \leq T}$ of the form $\lambda^Q_t=\bar{\lambda}^Q+(\Lambda^Q)^{\top}X_t$, for $\bar{\lambda}^Q\in\R_+$ and $\Lambda^Q\in\R^m_+\times\{0\}^{d-m}$ with $\bar{\lambda}^Q+\sum_{i=1}^m\Lambda^Q_i>0$.
\end{enumerate}
\end{Def}

We denote by $\Q$ the family of all risk-neutral measures which preserve the affine structure of $(X,\tau)$, in the sense of Definition \ref{aff-pres}. The next theorem gives a complete characterisation of the family $\Q$. This result follows from a more general one in Chapt. 2 of \cite{Fo2}, but we outline a self-contained proof for the convenience of the reader. We denote by $\mathcal{E}$ the stochastic exponential and by $M\!=\!(M_t)_{0\leq t \leq T}$ the $(P,\GG)$-martingale defined by $M_t:=\ind_{\{\tau\leq t\}}\!-\!\int_0^{t\wedge\tau}\!\!\lambda^P_udu$ (see \cite{BR}, Prop. 5.1.3).

\begin{Thm}	\label{measure-change}
Let $Q$ be a probability measure on $(\Omega,\G)$. Then we have $Q\in\Q$ if and only if
\be	\label{RN}	\ba
\frac{dQ}{dP} 
&= \mathcal{E}\left(\int\!\theta\,dW+\int\!\gamma\,dM\right)_{\!T}	\\
&= \exp\left(\sum_{i=1}^d\int_0^T\!\!\theta_t^i\,dW^i_t
-\frac{1}{2}\sum_{i=1}^d\int_0^T\!(\theta^i_t)^2dt
-\int_0^{\tau\wedge T}\!\!\!\gamma_t\,\lambda^P_tdt\right)
\Bigl(1+\mathbf{1}_{\left\{\tau\leq T\right\}}\gamma_{\tau}\Bigr)
\ea	\ee
where $\theta=(\theta_t)_{0\leq t \leq T}$ and $\gamma=(\gamma_t)_{0\leq t \leq T}$ are $\FF$-adapted processes of the following form:
\be	\label{MPR-1}
\theta_t = \theta(X_t) := R_t^{-1/2}\,\bigl(\hat{\theta}+\Theta X_t\bigr)\,,
\qquad\qquad
\gamma_t = \gamma(X_t)
:= \frac{\bigl(\bar{\lambda}^Q-\bar{\lambda}^P\bigr)+\bigl(\Lambda^Q-\Lambda^P\bigr)^{\top}X_t}
{\bar{\lambda}^P+\bigl(\Lambda^P\bigr)^{\top}X_t}\,,
\ee
for some $\hat{\theta}\in\R^d$ and $\Theta\in\R^{d\times d}$ such that:
\begin{enumerate}
\item[(i)]
$\sum_{k=1}^d\Sigma_{i,k}\hat{\theta}_k\geq (\Sigma_{\,i,i})^2\beta_{i,i}/2-b_i$ for all $i\in I$;
\item[(ii)]
$\sum_{k=1}^d\Sigma_{i,k}\Theta_{k,j}=0$, for all $i\in I$ and $j\in J$, and $\sum_{k=1}^d\Sigma_{i,k}\Theta_{k,j}\geq-A_{i,j}$, for all $i,j\in I$ with $i\neq j$;
\end{enumerate}
for some $\bar{\lambda}^Q\in\R_+$ and $\Lambda^Q\in\R^m_+\times\{0\}^{d-m}$ with $\bar{\lambda}^Q+\sum_{i=1}^m\Lambda^Q_i>0$
and if the following equality holds $P$-a.s. on $\{\tau>t\}$, using the notation introduced in Proposition \ref{SDE-S}:
\be	\label{MPR-3}
\bar{s}+\mu_1\log S_{t-}+\biggl(\mu_2+\sigma\frac{\theta^1_t}{\sqrt{v_t}}\biggr)\,v_t+\sum_{i=1}^{d-2}\eta_iY_t^i+\sum_{i=1}^{m-1}\bar{\eta}_iY_t^i+\sum_{i=2}^d\Sigma_{d,i}\sqrt{R_t^{\,i,i}}\,\theta_t^i 
= r_t + \lambda^P_t(1+\gamma_t)\,.
\ee
\end{Thm}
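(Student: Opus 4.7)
The plan is to prove both implications simultaneously by first deriving the general form of any Radon--Nikodym density for $Q\sim P$ on $(\Omega,\G)$ and then translating Definition \ref{aff-pres}(i)--(ii) and the risk-neutral property into the parametric constraints \eqref{MPR-1}--\eqref{MPR-3}.

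First, I would exploit the fact that, since $\FF$ is the augmented Brownian filtration and $\tau$ is $(P,\FF)$-doubly stochastic, the progressively enlarged filtration $\GG$ enjoys the $(P,\GG)$-predictable representation property with respect to the pair $(W,M)$ (see e.g.\ Sect.\ 5 of \cite{BR}). Hence any strictly positive $(P,\GG)$-martingale $Z$ with $Z_0=1$ admits a representation $Z=\mathcal{E}(\int\!\theta\,dW+\int\!\gamma\,dM)$ for $\FF$-predictable processes $\theta\in\R^d$ and $\gamma>-1$ (the $\FF$-predictability being a consequence of the Cox structure of $\tau$). The exponential expression in \eqref{RN} then follows from Yor's formula applied to the sum of the two orthogonal driving martingales ($\langle W,M\rangle\equiv 0$), using that $M$ is purely discontinuous with a single unit jump at $\tau$. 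Taking $Z_t:=E[dQ/dP\,|\,\G_t]$ gives the general shape of any admissible density; it then remains to impose the structural and risk-neutral requirements.

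Next, for item (i) of Definition \ref{aff-pres}, Girsanov shows that $W^Q_t:=W_t-\int_0^t\theta_s\,ds$ is a $(Q,\FF)$-Brownian motion, and substitution into \eqref{aff-P} augments the drift by $\Sigma\sqrt{R_t}\,\theta_t$. Since $R_t$ is positive definite, this correction is affine in $X_t$ if and only if $\sqrt{R_t}\,\theta_t=\hat{\theta}+\Theta X_t$ for some $\hat{\theta}\in\R^d$ and $\Theta\in\R^{d\times d}$, which is exactly the form in \eqref{MPR-1}; the new parameters are $b^Q=b+\Sigma\hat{\theta}$ and $A^Q=A+\Sigma\Theta$, and requiring that they again satisfy Assumption \ref{admissibility} (items (iii)--(v) are automatic, as they involve only the unchanged $\Sigma,\alpha,\beta$) reduces entrywise to conditions (i)--(ii) of the theorem. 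For item (ii), Girsanov for the jump part shows that the $Q$-compensator of $\mathbf{1}_{\{\tau\leq t\}}$ is $\int_0^{t\wedge\tau}(1+\gamma_s)\lambda^P_s\,ds$; demanding that this coincide with $\int_0^{t\wedge\tau}(\bar{\lambda}^Q+(\Lambda^Q)^{\top}X_s)\,ds$ and solving for $\gamma$ on $\{\tau>t\}$ yields exactly the formula in \eqref{MPR-1}, with the constraint $\gamma_t>-1$ guaranteed by strict positivity of both $\lambda^P$ and $\lambda^Q$.

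Finally, the risk-neutral condition that $\exp(-\!\int_0^{\cdot}\!r_u\,du)\,S$ be a $(Q,\GG)$-local martingale is equivalent to the vanishing of its $Q$-drift on $\{\tau>t\}$. Rewriting \eqref{dynamics-S} by substituting $dW^i=dW^{Q,i}+\theta^i_t\,dt$ and $d\mathbf{1}_{\{\tau\leq t\}}=\lambda^Q_t\,dt+dM^Q_t$, then dividing by $S_{t-}=\tildeS_{t-}>0$, gives precisely \eqref{MPR-3} after rearrangement (noting $\Sigma_{d,1}\sqrt{R_t^{1,1}}=\sigma\sqrt{v_t}$ by definition of $\sigma$). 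The converse direction consists of defining $\theta,\gamma$ by \eqref{MPR-1} from parameters satisfying the stated conditions, taking $dQ/dP$ equal to the right-hand side of \eqref{RN}, and reversing each of the above steps to confirm Definition \ref{aff-pres}(i)--(ii) and the local-martingale property of the discounted stock. The main technical obstacle I anticipate is verifying that this candidate density is a \emph{true} $(P,\GG)$-martingale and not merely a local one: because $\theta$ contains $R_t^{-1/2}$ and is unbounded in $X$, neither Novikov nor Kazamaki applies directly, and one must instead exploit exponential-moment estimates specific to affine processes, in the spirit of the general change-of-measure result in Chapt.\ 2 of \cite{Fo2}.
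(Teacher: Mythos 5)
Your proposal follows essentially the same route as the paper: predictable representation with respect to $(W,M)$ for the shape of the density, Girsanov on the Brownian part to translate preservation of \eqref{aff-P} into the affine form of $\sqrt{R_t}\,\theta_t$ and the entrywise conditions (i)--(ii) via $b^Q=b+\Sigma\hat{\theta}$ and $A^Q=A+\Sigma\Theta$, Girsanov on the jump part for $\gamma$, and drift cancellation of the discounted stock for \eqref{MPR-3}. Two points remain open in your write-up, both of which the paper closes by citation rather than by new argument. First, the true-martingale property of $Z=\mathcal{E}\bigl(\int\theta\,dW+\int\gamma\,dM\bigr)$: you correctly identify that Novikov and Kazamaki fail because $\theta$ is unbounded in $X$, and the paper settles $E[Z_T]=1$ by invoking Thm.\ 2.4 and Remark 2.5 of \cite{CFY}, which is precisely the affine-specific criterion you gesture at; without some such result this step is a genuine hole, so you should make the appeal explicit. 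Second, part (ii) of Definition \ref{aff-pres} requires more than computing the $Q$-compensator of $\mathbf{1}_{\{\tau\leq\cdot\}}$: one must also check that $\tau$ remains \emph{doubly stochastic} under $Q$, i.e.\ that the conditional-independence structure $Q(\tau>t\,|\,\F_T)=Q(\tau>t\,|\,\F_t)$ survives the change of measure; the paper handles this via Thm.\ 6.3 of \cite{CJN} together with Prop.\ 6.2.2 of \cite{BR}, and your argument as stated only delivers the intensity formula for $\gamma$. With these two references (or equivalent arguments) supplied, your proof is complete and coincides in substance with the paper's.
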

\begin{proof}
Let $\theta=(\theta_t)_{0\leq t \leq T}$ and $\gamma=(\gamma_t)_{0\leq t \leq T}$ be two $\FF$-adapted processes satisfying \eqref{MPR-1}. Since $\theta$ and $\gamma$ are continuous functions of $X$ and the process $X$ is continuous, hence locally bounded, the process $Z:=\mathcal{E}\bigl(\int\!\theta\,dW+\int\!\gamma\,dM\bigr)$ is well-defined as a strictly positive $(P,\GG)$-local martingale and, as a consequence of Fatou's lemma, it is also a $(P,\GG)$-supermartingale. Moreover, Thm. 2.4 and Remark 2.5 of \cite{CFY} allow to conclude that $E[Z_T]=1$, thus implying that the process $Z$ is a uniformly integrable $(P,\GG)$-martingale. So, we can define a probability measure $Q$ on $(\Omega,\G)$ via \eqref{RN}. Part (i) of Definition \ref{aff-pres} then follows from Girsanov's theorem together with \eqref{MPR-1}, while part (ii) follows from Thm. 6.3 of \cite{CJN}, Girsanov's theorem together with \eqref{MPR-1} and Prop. 6.2.2 of \cite{BR}. Finally, the $(Q,\GG)$-local martingale property of $\exp\bigl(-\int_0^{\cdot}\!r_udu\bigr)\,S$ easily follows from Girsanov's theorem together with Proposition \ref{SDE-S} and equation \eqref{MPR-3}.
Conversely, suppose that $Q\in\Q$. The existence of a representation of the form \eqref{RN} follows from Cor. 5.2.4 of \cite{BR}, while \eqref{MPR-1} and \eqref{MPR-3} follow from Girsanov's theorem together with Definition \ref{aff-pres} and Proposition \ref{SDE-S}, respectively.
\end{proof}

Note that the process $\gamma=(\gamma_t)_{0\leq t \leq T}$ introduced in \eqref{MPR-1} satisfies $\gamma_t>-1$ $P$-a.s. for all $t\in[0,T]$, due to the restrictions imposed on the parameters $\bar{\lambda}^P$, $\bar{\lambda}^Q$, $\Lambda^P$ and $\Lambda^Q$. In particular, this ensures that, for every probability measure $Q\in\Q$, both the $P$-intensity process $\lambda^P=(\lambda^P_t)_{0\leq t \leq T}$ and the $Q$-intensity process $\lambda^Q=(\lambda^Q_t)_{0\leq t \leq T}$ are $P$-a.s. strictly positive.

Due to Theorem \ref{measure-change}, the preservation of the affine structure of $(X,\tau)$ does not prevent the default intensity to change significantly from the physical to a risk-neutral probability measure $Q\in\Q$, due to the presence of the risk premium $\gamma$ (see also the comments below). From the practical perspective, this is an important aspect of our modeling approach, especially in view of the possibility of valuing credit/equity financial derivatives whose payoff also depends on the $P$-intensity of default through, for instance, the rating score attached to a defaultable firm or the corresponding statistical survival/default probability.

\begin{Rem}	\label{risk-premia}
The processes $\theta=(\theta_t)_{0\leq t \leq T}$ and $\gamma=(\gamma_t)_{0\leq t \leq T}$ admit the financial interpretation of \emph{risk premia} (or \emph{market prices of risk}) associated to the randomness generated by the Brownian motion $W$ and by the random default time $\tau$, respectively. More specifically:
\begin{enumerate} 
\item[(a)]
The process $\theta=(\theta_t)_{0\leq t \leq T}$ represents the risk premium associated to the diffusive risk generated by the Brownian motion $W$. Since the stock price, its stochastic volatility, the default intensity and the interest rate all depend on $W$ through the process $X$, the risk premium $\theta$ can be considered as a market-wide non-diversifiable risk premium\footnote{In the context of default-free term structure modeling, in \cite{CFK1} the authors demonstrate that the specification \eqref{MPR-1} has a considerably better fit to market data than the simpler market price of risk specifications traditionally considered in the literature (see e.g. \cite{CWu,DSi,Du1,Du2,He}).}.
\item[(b)]
The process $\gamma=(\gamma_t)_{0\leq t \leq T}$ represents the risk premium associated to the default event or, more precisely, the risk premium associated to the idiosyncratic component of the risk generated by the occurrence of the default event (to this effect, see also \cite{CPS,EKM} and Sect. 9.3 of \cite{McNFE}). 
\end{enumerate}
The importance of explicitly distinguishing between $\theta$ and $\gamma$ has been demonstrated in \cite{Dr}. Assuming $\gamma\equiv 0$ means that the idiosyncratic component of default risk can be diversified away in the market, as explained in \cite{JLY}, and, therefore, market participants do not require a compensation for it. However, the jump-type risk premium can be significant when it is difficult to hedge the risk associated with the timing of the default event of a given firm. Note that, as can be seen from \eqref{MPR-1}, the risk premia $\theta$ and $\gamma$ both depend on the common driving process $X$.
\end{Rem}

Due to Theorem \ref{measure-change}, our modeling framework enjoys full analytical tractability under both the physical measure $P$ and any risk-neutral measure $Q\in\Q$, thus enabling us to efficiently solve risk management as well as a pricing problems, as we are going to show in Sections \ref{S4}-\ref{S5}. We close this section with the following fundamental result, which follows from Thm. 10.4 of \cite{Fi} together with part (i) of Definition \ref{aff-pres}, equation \eqref{aff-P} and Assumption \ref{admissibility}. For $z\in\C^d$ we denote by $\Re(z)$ and $\Im(z)$ the real and imaginary parts of $z$, respectively, and $\C^m_-:=\bigl\{z\in\C^m:\Re(z)\in\R^m_-\bigr\}$. For $Q\in\Q\cup\{P\}$, we denote by $E^Q$ the (conditional) expectation operator under the measure $Q$, with $E:=E^P$.

\begin{Prop}	\label{CF}
For every $Q\in\Q\cup\{P\}$ and for all $z\in\C^m_-\times i\R^{d-m}$, there exists a unique solution $\bigl(\Phi^Q(\cdot,z),\Psi^Q(\cdot,z)\bigr):\left[0,T\right]\rightarrow\C\times\C^d$ to the following system of Riccati ODEs:
\be	\label{Ric}	\ba
\partial_t\Phi^Q(t,z)
&= (b^Q)^{\!\top}\,\Psi^Q(t,z)+\frac{1}{2}\sum_{k=m+1}^d[\Sigma^{\top}\Psi^Q(t,z)]_k^2\,\alpha_k
-\bar{\lambda}^Q - \bar{r}\,\ind_{\!Q\neq P}\,, \\
\Phi^Q(0,z) &= 0\,, \\
\partial_t\Psi^Q_i(t,z)
&= \sum_{k=1}^dA^Q_{k,i}\,\Psi^Q_k(t,z)
+\frac{1}{2}[\Sigma^{\top}\Psi^Q(t,z)]_i^2\beta_{i,i}
+\frac{1}{2}\sum_{k=m+1}^d[\Sigma^{\top}\Psi^Q(t,z)]_k^2\,\beta_{i,k} - \Lambda^Q_i	
- \Upsilon_i\ind_{\!Q\neq P}\,,\\[-0.25cm]
&\hspace{13.85cm} \forall i\in I\,,\\
\partial_t\Psi^Q_j(t,z) 
&= \sum_{k=m+1}^dA^Q_{k,j}\,\Psi^Q_k(t,z)\,,	
\hspace{9.7cm} \forall j\in J\,,\\
\Psi^Q(0,z) &= z\,.
\ea	\ee
Furthermore, for every $Q\in\Q\cup\{P\}$, the following holds for all $0\leq t\leq u\leq T$ and for all $z\in\C^m_-\times i\R^{d-m}$:
\be	\label{CF-1}
E^Q\left[\exp\biggl(-\!\int_t^u\!\!(\lambda^Q_s+r_s\ind_{Q\neq P})\,ds+z^{\top}X_u\biggr)\biggr|\,\F_t\right]
= \exp\left(\Phi^Q\left(u-t,z\right)+\Psi^Q\left(u-t,z\right)^{\top}X_t\right).
\ee
\end{Prop}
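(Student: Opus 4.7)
The plan is to reduce the statement to the standard transform analysis for affine diffusions, specifically Theorem 10.4 of \cite{Fi}, by checking that under any measure $Q\in\Q\cup\{P\}$ both the dynamics of $X$ and the discount rate appearing in \eqref{CF-1} fit the admissible affine framework of that reference.

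First I would observe that for any $Q\in\Q\cup\{P\}$, the process $X$ solves an SDE of the form \eqref{aff-P} on the canonical state space $\R^m_+\times\R^{d-m}$ with parameters $(A^Q,b^Q,\Sigma,\alpha,\beta)$ satisfying Assumption \ref{admissibility}: for $Q=P$ this is by construction, and for $Q\in\Q$ this is precisely part (i) of Definition \ref{aff-pres}, noting that the diffusion parameters $\Sigma,\alpha,\beta$ are not modified by the change of measure. Hence $X$ is an admissible affine diffusion in the sense of Chapter 10 of \cite{Fi}.

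Second I would observe that the discount rate $\lambda^Q_s+r_s\,\ind_{Q\neq P}$ is an affine function of $X_s$ of the form $\bigl(\bar{\lambda}^Q+\bar{r}\,\ind_{Q\neq P}\bigr)+\bigl(\Lambda^Q+\Upsilon\,\ind_{Q\neq P}\bigr)^{\top}X_s$, whose constant term is non-negative and whose linear coefficients vanish on the indices in $J$, by \eqref{lambda-r} and by part (ii) of Definition \ref{aff-pres}. This is exactly the admissibility requirement that \cite{Fi} imposes on an affine discount factor. Applying Theorem 10.4 of \cite{Fi} then directly yields the existence and uniqueness of a solution $(\Phi^Q,\Psi^Q)$ to a system of Riccati ODEs on $[0,T]$ for every $z\in\C^m_-\times i\R^{d-m}$, together with the exponential-affine representation \eqref{CF-1}.

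The only remaining task is to rewrite the generic Riccati system of \cite{Fi} in the explicit form \eqref{Ric}, which amounts to routine bookkeeping using conditions (ii)--(iv) of Assumption \ref{admissibility}: the entries $A^Q_{i,j}$ for $i\in I,\,j\in J$ vanish, $\beta_{j,i}=0$ for $i\in D,\,j\in J$, and $\Sigma$ is diagonal on the $I$-block. Precisely these vanishings decouple the equations for $\Psi^Q_j$, $j\in J$, from the nonlinear terms in $[\Sigma^{\top}\Psi^Q]^2$ and produce the split form \eqref{Ric}. I expect the main analytical content, namely the non-explosion of the Riccati solutions on $[0,T]$ and the finiteness of the conditional expectation in \eqref{CF-1} for $z\in\C^m_-\times i\R^{d-m}$, to be entirely absorbed by Theorem 10.4 of \cite{Fi}; beyond verifying admissibility and performing the above bookkeeping, nothing further needs to be checked.
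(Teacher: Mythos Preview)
Your proposal is correct and follows essentially the same approach as the paper, which simply states that the result follows from Thm.~10.4 of \cite{Fi} together with part (i) of Definition~\ref{aff-pres}, equation~\eqref{aff-P} and Assumption~\ref{admissibility}. You have merely expanded on this one-line citation by spelling out the verification of admissibility (including for the affine discount rate via \eqref{lambda-r} and part (ii) of Definition~\ref{aff-pres}) and the bookkeeping needed to cast the generic Riccati system into the specific form~\eqref{Ric}.
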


\section{Risk management applications}	\label{S4}

Many quantities of interest in view of risk management applications can be computed as conditional expectations under the physical measure $P$. As a first and basic application, let us compute the $\G_t$-conditional survival probability of the defaultable firm up to the final horizon $T$. We denote by $\Phi^P(\cdot,\cdot)$ and $\Psi^P(\cdot,\cdot)$ the solutions to the Riccati ODEs \eqref{Ric} with $Q=P$.

\begin{Prop}	\label{surv-P}
For any $t\in[0,T]$ the following holds:
\be	\label{surv-prob}
P\left(\tau>T\,|\,\G_t\right) = \ind_{\{\tau>t\}}\exp\left(\Phi^P(T-t,0)+\Psi^P(T-t,0)^{\!\top}X_t\right).
\ee
\end{Prop}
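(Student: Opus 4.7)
The plan is to reduce the $\GG$-conditional survival probability to an $\F_t$-conditional expectation of a discounted exponential of the intensity, and then directly invoke Proposition~\ref{CF} with $Q=P$ and $z=0$.

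First, I would use the key lemma for progressive enlargement (see e.g.\ Sect.~5.1 of \cite{BR}), which, together with the $(P,\FF)$-doubly stochastic property of $\tau$, gives
\be \label{keylem}
P(\tau>T\,|\,\G_t)
= \ind_{\{\tau>t\}}\,\frac{P(\tau>T\,|\,\F_t)}{P(\tau>t\,|\,\F_t)}.
\ee
The denominator equals $\exp\bigl(-\!\int_0^t\!\lambda^P_u du\bigr)$ directly from the doubly stochastic definition. For the numerator, conditioning first on $\F_T$ and then on $\F_t$ yields
$$
P(\tau>T\,|\,\F_t)
= E\!\left[P(\tau>T\,|\,\F_T)\bigm|\F_t\right]
= E\!\left[\exp\biggl(-\!\int_0^T\!\lambda^P_u du\biggr)\biggm|\F_t\right]
= \exp\biggl(-\!\int_0^t\!\lambda^P_u du\biggr)\,
E\!\left[\exp\biggl(-\!\int_t^T\!\lambda^P_u du\biggr)\biggm|\F_t\right],
$$
where the last equality uses $\FF$-adaptedness of $\lambda^P$ to pull the deterministic (given $\F_t$) factor out. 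Substituting into \eqref{keylem}, the prefactors cancel and we obtain
$$
P(\tau>T\,|\,\G_t)
= \ind_{\{\tau>t\}}\,
E\!\left[\exp\biggl(-\!\int_t^T\!\lambda^P_u du\biggr)\biggm|\F_t\right].
$$

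The second and final step is to evaluate the right-hand side via Proposition~\ref{CF}. With $Q=P$, the indicator $\ind_{Q\neq P}$ vanishes, so the risk-free rate drops out of formula \eqref{CF-1}, and taking $z=0$ gives exactly
$$
E\!\left[\exp\biggl(-\!\int_t^T\!\lambda^P_u du\biggr)\biggm|\F_t\right]
= \exp\!\left(\Phi^P(T-t,0)+\Psi^P(T-t,0)^{\top}X_t\right),
$$
which when multiplied by $\ind_{\{\tau>t\}}$ yields \eqref{surv-prob}.

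There is no real obstacle here: the affine transform formula of Proposition~\ref{CF} does all the analytic work, and the only structural input is the standard conditional-survival decomposition in the progressively enlarged filtration. The one point worth being explicit about is why the $r$-term in \eqref{Ric}--\eqref{CF-1} disappears (namely, that the indicator $\ind_{Q\neq P}$ is $0$ under $Q=P$), so that the formula yields the intensity-only Laplace functional needed to match the survival probability.
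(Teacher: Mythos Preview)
Your proof is correct and follows essentially the same route as the paper: reduce $P(\tau>T\,|\,\G_t)$ to $\ind_{\{\tau>t\}}E\bigl[\exp\bigl(-\int_t^T\lambda^P_s\,ds\bigr)\bigm|\F_t\bigr]$ and then apply formula \eqref{CF-1} with $Q=P$, $z=0$, $u=T$. The only difference is cosmetic: the paper obtains the first identity in one line by citing Cor.~5.1.1 of \cite{BR}, whereas you unpack that corollary via the key-lemma ratio and the doubly stochastic property; the content is identical.
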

\begin{proof}
Cor. 5.1.1 of \cite{BR} implies that $P(\tau>T\,|\,\G_t)=\ind_{\{\tau>t\}}E[\exp(-\int_t^T\!\!\lambda^P_sds)\,|\,\F_t\,]$. The result then follows by applying formula \eqref{CF-1} with $Q=P$, $z=0$ and $u=T$.
\end{proof}

As can be easily checked from \eqref{Ric}, the right-hand side of \eqref{surv-prob} only depends on $\{X^i:i\in I\}$, i.e., on the components of the process $X$ on which the $P$-intensity $\lambda^P$ depends. 
For computing conditional expectations (under the measure $P$) of more general quantities needed for risk management purposes, it turns out to be convenient to introduce the \emph{$T$-survival measure} $P^T\sim P$ on $(\Omega,\G)$ defined by $dP^T/dP:=\exp\bigl(-\int_0^T\!\!\lambda^P_tdt\bigr)/E\bigl[\exp\bigl(-\int_0^T\!\!\lambda^P_tdt\bigr)\bigr]$.

\begin{Lem}	\label{PT}
For any random variable $F\in L^1(P,\F_T)$ and for any $t\in[0,T]$ the following holds:
\be	\label{PT-1}
E\left[F\,\ind_{\{\tau>T\}}\,|\,\G_t\right] 
= P(\tau>T\,|\,\G_t)\,E^{P^T}\!\left[F\,|\,\F_t\,\right].
\ee
\end{Lem}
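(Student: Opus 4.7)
The plan is to reduce both sides to $\FF$-conditional expectations of an expression involving $F\exp(-\int_0^T\!\lambda^P_u du)$, and then match them.

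First, I would use the doubly stochastic property of $\tau$ (specifically, the key consequence encoded in Cor.\ 5.1.1 of \cite{BR}, as already invoked in Proposition \ref{surv-P}) to rewrite the left-hand side as an $\F_t$-conditional expectation. Since $P(\tau>T\,|\,\F_T)=\exp(-\int_0^T\!\lambda^P_u du)$ and $F$ is $\F_T$-measurable, the standard key lemma gives
$$
E\!\left[F\,\ind_{\{\tau>T\}}\,|\,\G_t\right]
= \ind_{\{\tau>t\}}\,\exp\!\biggl(\int_0^t\!\lambda^P_u du\biggr)\,
E\!\left[F\,\exp\!\biggl(-\!\int_0^T\!\lambda^P_u du\biggr)\,\bigg|\,\F_t\right].
$$
Taking $F\equiv 1$ in this identity recovers the expression for $P(\tau>T\,|\,\G_t)$ used in Proposition \ref{surv-P}, namely
$$
P(\tau>T\,|\,\G_t) = \ind_{\{\tau>t\}}\,\exp\!\biggl(\int_0^t\!\lambda^P_u du\biggr)\,
E\!\left[\exp\!\biggl(-\!\int_0^T\!\lambda^P_u du\biggr)\,\bigg|\,\F_t\right].
$$

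Second, I would handle the right-hand side by Bayes' formula for the measure change $dP^T/dP=\exp(-\int_0^T\!\lambda^P_u du)/E[\exp(-\int_0^T\!\lambda^P_u du)]$. Because the density of $P^T$ with respect to $P$ is $\F_T$-measurable, and $F\in L^1(P,\F_T)$, the Bayes rule applied on $\F_t$ yields
$$
E^{P^T}\!\left[F\,|\,\F_t\right]
= \frac{E\bigl[F\exp(-\int_0^T\!\lambda^P_u du)\,|\,\F_t\bigr]}
{E\bigl[\exp(-\int_0^T\!\lambda^P_u du)\,|\,\F_t\bigr]},
$$
where the normalising constants cancel.

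Third, multiplying this expression by the formula for $P(\tau>T\,|\,\G_t)$ recalled above produces exactly
$$
\ind_{\{\tau>t\}}\,\exp\!\biggl(\int_0^t\!\lambda^P_u du\biggr)\,
E\!\left[F\,\exp\!\biggl(-\!\int_0^T\!\lambda^P_u du\biggr)\,\bigg|\,\F_t\right],
$$
which matches the right-hand side obtained in the first step, completing the proof. There is essentially no obstacle here: the only point worth checking carefully is the measurability of the density of $P^T$ with respect to $\F_T$ (so that the $\F_t$-Bayes formula applies cleanly) and the integrability of $F\exp(-\int_0^T\!\lambda^P_u du)$, which follows from $F\in L^1(P)$ and the boundedness from above of the exponential factor by $1$.
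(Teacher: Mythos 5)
Your proof is correct and follows essentially the same route as the paper: the key lemma for doubly stochastic times (Cor.\ 5.1.1 of Bielecki--Rutkowski) to reduce the left-hand side to an $\F_t$-conditional expectation, followed by the conditional Bayes formula for the density $dP^T/dP$. The only difference is that you spell out the intermediate cancellation explicitly, which the paper leaves implicit.
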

\begin{proof}
Cor. 5.1.1 of \cite{BR} implies that $E\left[F\,\ind_{\{\tau>T\}}\,|\,\G_t\right]=\ind_{\{\tau>t\}}E\bigl[F\exp\bigl(-\int_t^T\!\!\lambda^P_sds\bigr)\,|\,\F_t\,\bigr]$. Equation \eqref{PT-1} then follows by using the definition of the measure $P^T$ together with the conditional version of Bayes' formula (see e.g. \cite{Fi}, Ex. 4.9).
\end{proof}

Lemma \ref{PT} shows that the computation of the $\G_t$-conditional expectation of an $\F_T$-measurable random variable $F$ in the case of survival up to time $T$ reduces to the computation of the $\F_t$-conditional expectation of $F$ under the $T$-survival measure $P^T$, the term $P(\tau>T\,|\,\G_t)$ being given as in \eqref{surv-prob}. As can be seen from equation \eqref{PT-1}, the $T$-survival measure $P^T$ allows to decompose the conditional expectation of the product $F\,\ind_{\{\tau>T\}}$ into the product of two conditional expectations. Note also that, from the point of view of practical applications, the term $P(\tau>T\,|\,\G_t)$ does not necessarily have to be computed, since it can often be deduced from publicly available data, notably from rating transition matrices published by rating agencies. Furthermore, as shown in the next lemma, the $\F_t$-conditional characteristic function of the vector $X_T$ under the $T$-survival measure $P^T$ can be computed in closed form.

\begin{Lem}	\label{CF-P}
For any $z\in i\R^d$ and for any $t\in[0,T]$ the following holds:
\be	\label{CF-P1}
\varphi_t^{P^T}\!(z) :=
E^{P^T}\bigl[e^{\,z^{\top}X_T}|\,\F_t\,\bigr]
= \exp\,\Bigl(\Phi^P(T-t,z)-\Phi^P(T-t,0)+\bigl(\Psi^P(T-t,z)-\Psi^P(T-t,0)\bigr)^{\top}X_t\Bigr).
\ee
\end{Lem}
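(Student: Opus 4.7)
The plan is to reduce the computation under $P^T$ to one under $P$ by conditional Bayes and then invoke Proposition \ref{CF}. First, I would recall that the Radon-Nikodym derivative defining $P^T$, namely $\exp\bigl(-\int_0^T\!\lambda^P_udu\bigr)/E\bigl[\exp\bigl(-\int_0^T\!\lambda^P_udu\bigr)\bigr]$, is $\F_T$-measurable. Hence the density process with respect to $\FF$ is $\eta_t=E\bigl[\exp\bigl(-\!\int_0^T\!\lambda^P_udu\bigr)|\F_t\bigr]/E\bigl[\exp\bigl(-\!\int_0^T\!\lambda^P_udu\bigr)\bigr]$, and the conditional Bayes formula (as used already in the proof of Lemma \ref{PT}) yields
$$
E^{P^T}\!\bigl[e^{\,z^{\top}X_T}\bigm|\F_t\bigr]
= \frac{E\bigl[e^{\,z^{\top}X_T}\exp\bigl(-\!\int_0^T\!\lambda^P_udu\bigr)\bigm|\F_t\bigr]}{E\bigl[\exp\bigl(-\!\int_0^T\!\lambda^P_udu\bigr)\bigm|\F_t\bigr]}.
$$

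Next, in both numerator and denominator I would factor out the $\F_t$-measurable quantity $\exp\bigl(-\!\int_0^t\!\lambda^P_udu\bigr)$, leaving
$$
E\bigl[e^{\,z^{\top}X_T}\exp\bigl(-\!\int_t^T\!\lambda^P_sds\bigr)\bigm|\F_t\bigr]
\quad\text{and}\quad
E\bigl[\exp\bigl(-\!\int_t^T\!\lambda^P_sds\bigr)\bigm|\F_t\bigr]
$$
respectively. Applying formula \eqref{CF-1} of Proposition \ref{CF} with $Q=P$ (so that $\ind_{Q\neq P}=0$) to the first expression gives $\exp\bigl(\Phi^P(T-t,z)+\Psi^P(T-t,z)^{\top}X_t\bigr)$, and to the second (choosing $z=0$) gives $\exp\bigl(\Phi^P(T-t,0)+\Psi^P(T-t,0)^{\top}X_t\bigr)$.

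Finally, forming the ratio, the common factor $\exp\bigl(-\!\int_0^t\!\lambda^P_udu\bigr)$ cancels and the two exponentials combine to produce precisely \eqref{CF-P1}. There is no real obstacle here beyond the careful bookkeeping of measurability and the cancellation of the normalising constant of $dP^T/dP$; the essential analytical content is packaged in Proposition \ref{CF}, and integrability of $e^{\,z^{\top}X_T}$ under $P^T$ for $z\in i\R^d$ is immediate since the integrand is bounded in modulus by $1$.
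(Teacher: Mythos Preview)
Your proof is correct and follows essentially the same approach as the paper: conditional Bayes to pass from $P^T$ to $P$, cancellation of the $\F_t$-measurable factor $\exp\bigl(-\!\int_0^t\lambda^P_u\,du\bigr)$, and then two applications of \eqref{CF-1} with $Q=P$ (for $z$ and for $0$). The only difference is that you spell out the factoring and the integrability remark explicitly, whereas the paper compresses these into a single sentence.
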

\begin{proof}
The definition of the $T$-survival measure $P^T$ together with the conditional version of Bayes' formula gives $E^{P^T}\bigl[e^{\,z^{\top}X_T}|\,\F_t\,\bigr]=E\bigl[\exp\bigl(-\int_t^T\!\!\lambda^P_sds+z^{\top}X_T\bigr)\,|\,\F_t\,\bigr]/E\bigl[\exp\bigl(-\int_t^T\!\!\lambda^P_sds\bigr)\,|\,\F_t\,\bigr]$. By applying \eqref{CF-1} with $Q=P$, $u=T$ and $z\in i\R^d$ ($z=0$, resp.) to the numerator (to the denominator, resp.), we then obtain equation \eqref{CF-P1}.
\end{proof}

Due to Lemma \ref{PT} and Lemma \ref{CF-P}, we can compute the $\G_t$-conditional expectation (under the physical probability measure $P$) of arbitrary functions of the random vector $X_T$ in the case of survival by relying on well-known Fourier inversion techniques. As an example, we can explicitly compute quantiles of the $\mathcal{G}_t$-conditional distribution of the defaultable price $S_T$ in the case of survival. This is crucial for the computation of \emph{Value-at-Risk} and related risk measures.

\begin{Prop}	\label{distr-fct}
For any $x\in(0,\infty)$ and for any $t\in[0,T]$ the following holds:
\be	\label{distr-fct-1}
P(S_T\leq x,\tau>T\,|\,\G_t) 
= P\left(\tau>T\,|\,\G_t\right)
\left(\frac{1}{2}-\frac{1}{\pi}
\int_0^{\infty}\frac{\Im\bigl(e^{-iy\log x}\,\varphi^{P^T}_t\!(0,\ldots,0,iy)\bigr)}{y}\,dy\right)
\ee
where $P(\tau>T\,|\,\G_t)$ and $\varphi_t^{P^T}\!(\cdot)$ are explicitly given in equation \eqref{surv-prob} and Lemma \ref{CF-P}, respectively.
\end{Prop}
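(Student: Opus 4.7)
The plan is to combine three ingredients already at hand: the relation $S_T\ind_{\{\tau>T\}}=\tildeS_T\ind_{\{\tau>T\}}=e^{X_T^d}\ind_{\{\tau>T\}}$, the survival-measure decomposition of Lemma \ref{PT}, and the Gil--Pelaez Fourier inversion formula applied via the conditional characteristic function from Lemma \ref{CF-P}.

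First I would rewrite the left-hand side of \eqref{distr-fct-1} on the survival event. Since $X_T^d=L_T=\log\tildeS_T$ and $S_T=\tildeS_T$ on $\{\tau>T\}$, we have
$$
\{S_T\leq x\}\cap\{\tau>T\} = \{X_T^d\leq \log x\}\cap\{\tau>T\}\,,
$$
so that $P(S_T\leq x,\tau>T\,|\,\G_t)=E\bigl[F\,\ind_{\{\tau>T\}}\,\big|\,\G_t\bigr]$ with $F:=\ind_{\{X_T^d\leq\log x\}}\in L^1(P,\F_T)$. Applying Lemma \ref{PT} directly yields
$$
P(S_T\leq x,\tau>T\,|\,\G_t) = P(\tau>T\,|\,\G_t)\,P^{P^T}\!\bigl(X_T^d\leq\log x\,\big|\,\F_t\bigr)\,,
$$
so it only remains to express the second factor via the characteristic function \eqref{CF-P1}.

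Next I would identify the $\F_t$-conditional characteristic function of the scalar random variable $X_T^d$ under $P^T$. Evaluating \eqref{CF-P1} at $z=(0,\ldots,0,iy)^{\top}$ gives
$$
E^{P^T}\!\bigl[e^{iyX_T^d}\,\big|\,\F_t\bigr] = \varphi_t^{P^T}(0,\ldots,0,iy)\,,\qquad y\in\R\,.
$$
Gil--Pelaez inversion then expresses the conditional distribution function: for a continuous one-dimensional distribution with characteristic function $\varphi$, the cumulative distribution function at $u$ equals $\tfrac{1}{2}-\tfrac{1}{\pi}\int_0^{\infty}\Im(e^{-iyu}\varphi(y))/y\,dy$. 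Applying this pointwise in $\omega$ to the regular conditional law of $X_T^d$ given $\F_t$ under $P^T$, with $u=\log x$, produces exactly the bracketed term on the right-hand side of \eqref{distr-fct-1}.

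The only subtlety is to justify the use of the Gil--Pelaez formula in this conditional setting. The conditional law of $X_T^d$ under $P^T$ admits a regular version because the conditional Laplace/Fourier transform is explicit and analytic in a strip (being the exponential of the affine expression in \eqref{CF-P1}); in particular $\Re\Phi^P(T-t,z)$ and $\Re\Psi^P(T-t,z)^{\top}X_t$ are finite for $z\in\C^m_-\times i\R^{d-m}$, which implies integrability of $|\varphi_t^{P^T}(0,\ldots,0,iy)|$ near infinity and hence absolute continuity of the conditional distribution of $X_T^d$. Under this continuity Gil--Pelaez applies without ambiguity at the boundary point $\log x$, and the formula \eqref{distr-fct-1} follows. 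I expect this integrability/continuity check to be the only real obstacle; the rest is a bookkeeping exercise combining Lemmas \ref{PT} and \ref{CF-P}.
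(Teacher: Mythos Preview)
Your proposal is correct and follows essentially the same route as the paper: rewrite $\{S_T\leq x,\tau>T\}$ as $\{L_T\leq\log x,\tau>T\}$, apply Lemma \ref{PT} to factor out $P(\tau>T\,|\,\G_t)$, and then invoke a Fourier inversion (Gil--Pelaez) argument using the conditional characteristic function of Lemma \ref{CF-P}. The paper's proof is terser, simply citing ``standard Fourier inversion techniques'' without your additional regularity discussion; note, incidentally, that Gil--Pelaez holds at continuity points without requiring integrability of the characteristic function, so your concern about integrability near infinity is stronger than strictly needed.
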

\begin{proof}
Note that 
$$
P\left(S_T\leq x,\tau>T|\G_t\right)
= P\left(L_T\leq\log x, \tau>T|\G_t\right)
= P\left(\tau>T|\G_t\right)P^T\!\left(L_T\leq\log x|\F_t\right)
$$
where the second equality follows from Lemma \ref{PT}. Equation \eqref{distr-fct-1} then follows from standard Fourier inversion techniques (see e.g. \cite{Fo2}, Prop. 2.5.12, and \cite{Pa}, Sect. 1.2.6).
\end{proof}

\section{Valuation of default-sensitive payoffs and defaultable options}	\label{S5}

Throughout this section, we fix an element $Q\in\Q$. For the purpose of valuing default-sensitive payoffs, the \emph{$u$-survival risk-neutral measure} $Q^u$, for $u\in[0,T]$, turns out to be quite useful. The measure $Q^u$ is defined by $dQ^u/dQ=\exp\bigl(-\int_0^u(r_s+\lambda^Q_s)\,ds\bigr)/E^Q\bigl[\exp\bigl(-\int_0^u(r_s+\lambda^Q_s)\,ds\bigr)\bigr]$. For $u=T$, the measure $Q^T$ bears resemblance to the \emph{$T$-survival measure} $P^T$ introduced in Section \ref{S4}, except that $Q^T$ is defined with respect to some $Q\in\Q$ and the density $dQ^T/dQ$ also involves the risk-free interest rate besides the $Q$-intensity $\lambda^Q$ (compare also with \cite{BR}, Def. 15.2.2). Following the same logic of Section \ref{S4}, we show that many pricing problems can be simplified by shifting to the measure $Q^u$, for some $u\in[0,T]$. As a preliminary, let us compute the arbitrage-free price $\Pi(t,T)$ of a \emph{zero-coupon defaultable bond}. We denote by $\Phi^Q(\cdot,\cdot)$ and $\Psi^Q(\cdot,\cdot)$ the solutions to the Riccati ODEs \eqref{Ric}. The proof of the following lemma is completely analogous to that of Proposition \ref{surv-P} but we include it for the convenience of the reader.

\begin{Lem}	\label{surv-Q}
For any $t\in[0,T]$ the following holds:
\be	\label{surv-Q1}
\Pi(t,T) = \ind_{\{\tau>t\}}\exp\left(\Phi^Q(T-t,0)+\Psi^Q(T-t,0)^{\!\top}X_t\right).
\ee
\end{Lem}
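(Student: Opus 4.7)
The plan is to follow essentially the same two-step route used for Proposition \ref{surv-P}, with two modifications: we now work under the chosen $Q\in\Q$ rather than under $P$, and we must include the discounting by the risk-free rate $r$ inside the expectation. Since a zero-coupon defaultable bond pays $\ind_{\{\tau>T\}}$ at maturity $T$, its arbitrage-free price at $t\in[0,T]$ is, by definition,
\be	\label{bond-def}
\Pi(t,T) = E^Q\!\left[\exp\!\left(-\!\int_t^T\!\!r_s\,ds\right)\ind_{\{\tau>T\}}\,\bigg|\,\G_t\right].
\ee

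The first step is to eliminate the $\G_t$-conditioning on the default indicator. Since $Q\in\Q$, part (ii) of Definition \ref{aff-pres} ensures that $\tau$ is a $(Q,\FF)$-doubly stochastic random time with $Q$-intensity $\lambda^Q$. Hence the key formula from Cor.\ 5.1.1 of \cite{BR} applies verbatim under $Q$: for any $\F_T$-measurable random variable $F$, $E^Q[F\,\ind_{\{\tau>T\}}|\,\G_t\,]=\ind_{\{\tau>t\}}E^Q[F\exp(-\!\int_t^T\!\lambda^Q_s ds)|\,\F_t\,]$. Taking $F=\exp(-\!\int_t^T\!r_s ds)$, which is $\F_T$-measurable since $r$ is $\FF$-adapted, \eqref{bond-def} reduces to
\be	\label{bond-red}
\Pi(t,T) = \ind_{\{\tau>t\}}\,E^Q\!\left[\exp\!\left(-\!\int_t^T\!\bigl(r_s+\lambda^Q_s\bigr)ds\right)\bigg|\,\F_t\right].
\ee

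The second step is to apply Proposition \ref{CF} to evaluate the $\F_t$-conditional expectation in \eqref{bond-red}. Formula \eqref{CF-1}, with the chosen $Q\in\Q$, $u=T$ and $z=0$, gives exactly
$$
E^Q\!\left[\exp\!\left(-\!\int_t^T\!\bigl(\lambda^Q_s+r_s\ind_{Q\neq P}\bigr)ds\right)\bigg|\,\F_t\right] = \exp\!\left(\Phi^Q(T-t,0)+\Psi^Q(T-t,0)^{\top}X_t\right),
$$
and since $Q\in\Q$ we have $\ind_{Q\neq P}=1$, so the exponent in the conditional expectation matches the one in \eqref{bond-red}. Substituting yields \eqref{surv-Q1}.

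There is essentially no obstacle in this argument: everything is a direct transcription of the proof of Proposition \ref{surv-P} with $P$ replaced by $Q$. The only subtlety worth noting is that the indicator $\ind_{Q\neq P}$ in the Riccati system \eqref{Ric} is designed precisely so that \eqref{CF-1} accommodates both the intensity-only discounting needed for $P$-survival probabilities and the combined interest-rate-plus-intensity discounting needed for defaultable bond prices under $Q$; thus no additional verification beyond citing \eqref{CF-1} is required.
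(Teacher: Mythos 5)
Your argument is correct and follows essentially the same route as the paper's proof: reduce the $\G_t$-conditional expectation to an $\F_t$-conditional one via the doubly stochastic property of $\tau$ under $Q$ (the paper cites Thm.\ 9.23 of \cite{McNFE} where you cite the $Q$-version of Cor.\ 5.1.1 of \cite{BR}, but these are the same standard reduction formula), and then apply Proposition \ref{CF} with $u=T$ and $z=0$. Your remark that $\ind_{Q\neq P}=1$ makes the exponent in \eqref{CF-1} match is a correct and worthwhile clarification of a detail the paper leaves implicit.
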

\begin{proof}
Note first that 
$$
\Pi(t,T)
=E^Q\left[e^{-\int_t^T\!\!r_sds}\,\ind_{\{\tau>T\}}\bigr|\G_t\right]
=\ind_{\{\tau>t\}}E^Q\left[e^{-\int_t^T\!(r_s+\lambda^Q_s)\,ds}\bigr|\F_t\right]
$$
where the second equality follows from Thm. 9.23 of \cite{McNFE}. Equation \eqref{surv-Q1} then follows from Proposition \ref{CF} with $u=T$ and $z=0$.
\end{proof}

Of course, coupon-bearing corporate bonds can be valued as linear combinations of zero-coupon defaultable bonds (see \cite{BR}, Sect. 1.1.5). More generally, most default-sensitive payoffs can be decomposed into linear combinations of \emph{zero-recovery} and \emph{pure recovery} payments, the latter being paid only in the case of default, see e.g. Sect. 9.4 of \cite{McNFE}. The next proposition provides general valuation formulas for zero-recovery and pure recovery payments.

\begin{Prop}	\label{gen}
For any $t\in[0,T]$ and for any measurable function $G:\R^m_{++}\!\times\R^{d-m}\rightarrow\R_+$ the following hold:
\be	\label{gen-1}
E^Q\Bigl[e^{-\int_t^T\!\!r_sds}G(X_T)\ind_{\{\tau>T\}}\bigr|\,\G_t\Bigr]
= \Pi(t,T)\,E^{Q^T}[G(X_T)\,|\,\F_t\,]\,,
\ee
\be	\label{gen-2}
\ind_{\{\tau>t\}}E^Q\Bigl[e^{-\int_t^{\tau}\!\!r_sds}G(X_{\tau})\ind_{\{\tau\leq T\}}\bigr|\,\G_t\Bigr]
= \int_t^T\!\!\Pi(t,u)\,E^{Q^u}[\lambda^Q_u\,G(X_u)\,|\,\F_t\,]\,du\,.
\ee
\end{Prop}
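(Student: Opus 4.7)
The plan is to handle the two formulas in parallel, since both rely on the same two basic ingredients: a reduction formula for doubly stochastic random times (which trades the $\GG_t$-conditional expectation on the survival set for an $\F_t$-conditional expectation with a $\lambda^Q$-discount factor), and the conditional Bayes formula applied to the density of $Q^u$ with respect to $Q$.

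For \eqref{gen-1}, I would start from the doubly stochastic property of $\tau$ under $Q$ (Thm.~9.23 of \cite{McNFE}, already invoked in the proof of Lemma~\ref{surv-Q}) to write
$$
E^Q\Bigl[e^{-\int_t^T\!r_sds}G(X_T)\ind_{\{\tau>T\}}\bigr|\G_t\Bigr]
= \ind_{\{\tau>t\}}E^Q\Bigl[G(X_T)\,e^{-\int_t^T\!(r_s+\lambda^Q_s)\,ds}\bigr|\F_t\Bigr].
$$
Next, using the definition $dQ^T/dQ=e^{-\int_0^T\!(r_s+\lambda^Q_s)\,ds}/E^Q\bigl[e^{-\int_0^T\!(r_s+\lambda^Q_s)\,ds}\bigr]$ together with the conditional Bayes formula, the $\F_t$-expectation on the right factors as
$$
E^Q\Bigl[G(X_T)\,e^{-\int_t^T\!(r_s+\lambda^Q_s)\,ds}\bigr|\F_t\Bigr]
= E^{Q^T}[G(X_T)\,|\,\F_t\,]\,E^Q\Bigl[e^{-\int_t^T\!(r_s+\lambda^Q_s)\,ds}\bigr|\F_t\Bigr],
$$
the $\F_t$-measurable factor $e^{-\int_0^t\!(r_s+\lambda^Q_s)\,ds}$ cancelling in numerator and denominator. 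Multiplying by $\ind_{\{\tau>t\}}$ and invoking Lemma~\ref{surv-Q} to identify $\Pi(t,T)$ yields \eqref{gen-1}.

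For \eqref{gen-2}, the starting point is the standard representation of $\GG_t$-conditional expectations of functionals of $\tau$ for a doubly stochastic random time (see e.g. \cite{BR}, Prop.~5.1.1 and Cor.~5.1.1): for any $\F_T\!\otimes\!\mathcal{B}([0,T])$-measurable non-negative $H$,
$$
E^Q\bigl[H(\tau)\ind_{\{t<\tau\leq T\}}\,|\,\G_t\bigr]
= \ind_{\{\tau>t\}}E^Q\!\left[\int_t^T\!H(u)\,\lambda^Q_u\,e^{-\int_t^u\!\lambda^Q_sds}du\,\Big|\,\F_t\right].
$$
Applying this with $H(u)=e^{-\int_t^u\!r_sds}G(X_u)$ and then invoking Fubini (justified by non-negativity of $G$ and $\lambda^Q$ and the fact that $r\geq 0$) gives
$$
\ind_{\{\tau>t\}}E^Q\Bigl[e^{-\int_t^{\tau}\!r_sds}G(X_{\tau})\ind_{\{\tau\leq T\}}\bigr|\G_t\Bigr]
= \ind_{\{\tau>t\}}\int_t^T\!E^Q\Bigl[\lambda^Q_u G(X_u)\,e^{-\int_t^u\!(r_s+\lambda^Q_s)\,ds}\bigr|\F_t\Bigr]du.
$$
The $u$-indexed integrand is now of exactly the form handled in the first part: the same conditional Bayes argument using $dQ^u/dQ$ rewrites it as $\Pi(t,u)\,E^{Q^u}[\lambda^Q_u G(X_u)\,|\,\F_t\,]$ (with $\Pi(t,u)$ absorbing the $\ind_{\{\tau>t\}}$), delivering \eqref{gen-2}.

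The routine calculations above are entirely mechanical once the two basic ingredients are in place. The main care-point, and the only place where something could go wrong, is the application of Fubini in the second part: one must ensure the double integrability of $\lambda^Q_u G(X_u)e^{-\int_t^u\!(r_s+\lambda^Q_s)\,ds}$ against $du\otimes dQ(\cdot|\F_t)$. Because $G\geq 0$, $\lambda^Q\geq 0$ and $r\geq 0$, Tonelli applies unconditionally, so the interchange is licit without any integrability assumption on $G$ beyond measurability and non-negativity, which is exactly the regularity the proposition assumes.
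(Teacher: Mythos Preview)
Your proof is correct and follows essentially the same route as the paper: both formulas are obtained by first applying the standard reduction formula for doubly stochastic random times (the paper cites Thm.~9.23 of \cite{McNFE} for both, whereas you cite \cite{BR} for the recovery case, but these are the same result), and then invoking the conditional Bayes formula with the density of $Q^u$, together with Tonelli's theorem for \eqref{gen-2}. Your justification of the Fubini/Tonelli step via non-negativity is exactly what the paper has in mind.
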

\begin{proof}
Note first that, due to Thm. 9.23 of \cite{McNFE}, we can write:
$$	\begin{aligned}
E^Q\Bigl[e^{-\int_t^T\!\!r_sds}G(X_T)\ind_{\{\tau>T\}}\bigr|\,\G_t\Bigr]
&= \ind_{\{\tau>t\}}E^Q\Bigl[e^{-\int_t^T\!(r_s+\lambda^Q_s)\,ds}G(X_T)\,\bigr|\,\F_t\,\Bigr]\,,	\\
\ind_{\{\tau>t\}}E^Q\Bigl[e^{-\int_t^{\tau}\!\!r_sds}G(X_{\tau})\ind_{\{\tau\leq T\}}\bigr|\,\G_t\Bigr]
&= \ind_{\{\tau>t\}}E^Q\biggl[\int_t^T\!\!e^{-\int_t^u(r_s+\lambda^Q_s)\,ds}\lambda^Q_u\,G(X_u)\,du\,\Bigr|\,\F_t\,\biggr]\,.
\end{aligned} $$
Equations \eqref{gen-1}-\eqref{gen-2} then follow by using the definition of the measure $Q^u$, for $u\in[t,T]$, together with the conditional version of Bayes' formula and also, for \eqref{gen-2}, Tonelli's theorem.
\end{proof}

We want to point out that, in view of practical applications, the quantities $\Pi(t,u)$, for $u\in[t,T]$, appearing in equations \eqref{gen-1}-\eqref{gen-2} do not necessarily have to computed, since they can be directly observed on the corporate bond market. This fact represents one of the main advantages of using survival risk-neutral measures for the valuation of defaultable claims (see also \cite{Sch} for a related discussion and other applications of survival measures to credit risk modeling).

As an application of Lemma \ref{surv-Q} and Proposition \ref{gen}, we compute the fair spread $\pi^{\,\text{CDS}}(t,T)$, at time $t\in[0,T]$, of a \emph{Credit Default Swap (CDS)} which exchanges a fixed stream of payments in arrears equal to $\pi^{\,\text{CDS}}(t,T)$ at the dates $\{t_1,\ldots,t_N\}$, with $t\leq t_1<\ldots<t_N\leq T$, (\emph{premium payment leg}) against the payment at the default time $\tau$ (if the latter happens before the maturity $T$) of a default protection term equal to a fraction $\delta\in(0,1)$ of the unitary nominal value (\emph{default payment leg}), see e.g. Sect. 9.3 of \cite{McNFE}.

\begin{Cor}	\label{CDS}
For any $t\in[0,T]$ and $t_0:=t\leq t_1<\ldots<t_N\leq T$ the following holds on $\{\tau>t\}$:
\be	\label{spread-CDS}
\pi^{\,\text{\emph{CDS}}}(t,T)
= \delta\,\frac{\int_t^T\!\Pi(t,u)\,E^{Q^u}[\lambda^Q_u\,|\,\F_t\,]\,du}
{\sum_{k=1}^N(t_k-t_{k-1})\,\Pi(t,t_k)}\,.
\ee
\end{Cor}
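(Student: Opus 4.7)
The plan is to derive the fair spread by equating the (risk-neutral) time-$t$ values of the premium payment leg and the default payment leg, both computed on the event $\{\tau>t\}$ (otherwise the CDS has already terminated). The two ingredients needed are already established: Lemma \ref{surv-Q} for discounted survival payments, and the pure-recovery formula \eqref{gen-2} of Proposition \ref{gen} for the default payoff.

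First I would write the time-$t$ value of the premium leg on $\{\tau>t\}$ as
\[
\pi^{\,\text{CDS}}(t,T)\sum_{k=1}^N(t_k-t_{k-1})\,E^Q\!\left[e^{-\int_t^{t_k}\!r_sds}\,\ind_{\{\tau>t_k\}}\bigr|\,\G_t\right],
\]
and observe that each conditional expectation is, by definition, $\Pi(t,t_k)$, so by Lemma \ref{surv-Q} it equals $\exp(\Phi^Q(t_k-t,0)+\Psi^Q(t_k-t,0)^{\top}X_t)$. Hence on $\{\tau>t\}$ the premium leg is worth $\pi^{\,\text{CDS}}(t,T)\sum_{k=1}^N(t_k-t_{k-1})\Pi(t,t_k)$.

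Next I would turn to the default payment leg. Its time-$t$ value on $\{\tau>t\}$ is
\[
\delta\,E^Q\!\left[e^{-\int_t^{\tau}\!r_sds}\,\ind_{\{\tau\leq T\}}\bigr|\,\G_t\right],
\]
which is precisely the quantity appearing on the left-hand side of \eqref{gen-2} with the constant function $G\equiv 1$. Applying \eqref{gen-2} yields the value $\delta\int_t^T\Pi(t,u)\,E^{Q^u}[\lambda^Q_u\,|\,\F_t]\,du$.

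Finally, the fair spread is determined by the no-arbitrage condition that the two legs have equal present value. Setting the two expressions equal and solving for $\pi^{\,\text{CDS}}(t,T)$ gives the claimed formula \eqref{spread-CDS}. I do not anticipate any genuine obstacle here, since the result is essentially a direct bookkeeping exercise combining Lemma \ref{surv-Q} with Proposition \ref{gen}; the only minor point requiring care is to make sure that working on $\{\tau>t\}$ is consistent throughout (so that the indicators $\ind_{\{\tau>t\}}$ appearing in \eqref{surv-Q1} and on the left-hand side of \eqref{gen-2} can be dropped without loss).
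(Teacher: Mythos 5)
Your proposal is correct and follows exactly the paper's own argument: the premium leg is valued via Lemma \ref{surv-Q} as $\pi^{\,\text{CDS}}(t,T)\sum_{k=1}^N(t_k-t_{k-1})\Pi(t,t_k)$, the default leg via \eqref{gen-2} with $G\equiv 1$, and the spread is obtained by equating the two legs. No discrepancies to report.
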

\begin{proof}
Due to Lemma \ref{surv-Q}, the arbitrage-free price of the premium payment leg is given by:
$$
\pi^{\,\text{CDS}}(t,T)\sum_{k=1}^N\bigl(t_k-t_{k-1}\bigr)
E^Q\Bigl[e^{-\int_t^{t_k}\!r_sds}\ind_{\{\tau>t_k\}}\bigr|\,\G_t\Bigr]
= \pi^{\,\text{CDS}}(t,T)\sum_{k=1}^N(t_k-t_{k-1})\,\Pi(t,t_k).
$$
On the other hand, due to equation \eqref{gen-2}, the arbitrage-free price of the default payment leg is equal to:
$$
\ind_{\{\tau>t\}}E^Q\left[e^{-\int_t^{\tau}\!\!r_sds}\delta\,\ind_{\{\tau\leq T\}}\bigr|\,\G_t\right]
= \delta\!\int_t^T\!\!\Pi(t,u)\,E^{Q^u}[\lambda^Q_u\,|\,\F_t\,]\,du.
$$
Equation \eqref{spread-CDS} then follows by recalling that, by definition, the fair spread $\pi^{\,\text{CDS}}(t,T)$ is the premium payment which equates the values of the two legs of the CDS (see \cite{McNFE}, Sect. 9.3).
\end{proof}

For $0\leq t \leq u\leq T$, the next lemma gives the explicit expression of the $\F_t$-conditional characteristic function $\varphi_t^{Q^u}$ of the random vector $X_u$ under the $u$-survival risk-neutral measure $Q^u$. Its proof follows from \eqref{CF-1} and, being analogous to that of Lemma \ref{CF-P}, is omitted.

\begin{Lem}	\label{CF-Q}
For any $0\leq t \leq u \leq T$ and for any $z\in i\R^d$ the following holds:
\be
\varphi_t^{Q^u}\!(z) :=
E^{Q^u}\bigl[e^{\,z^{\top}X_u}|\,\F_t\bigr]
= \exp\,\Bigl(\Phi^Q(u-t,z)-\Phi^Q(u-t,0)+\bigl(\Psi^Q(u-t,z)-\Psi^Q(u-t,0)\bigr)^{\top}X_t\Bigr).
\ee
\end{Lem}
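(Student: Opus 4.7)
The plan is to mimic the structure of the proof of Lemma \ref{CF-P} verbatim, replacing the $T$-survival measure $P^T$ by the $u$-survival risk-neutral measure $Q^u$ and applying formula \eqref{CF-1} with $Q\in\Q$ (so that the indicator $\ind_{Q\neq P}$ activates, producing both $\lambda^Q$ and $r$ in the discount term). The key algebraic observation is that the density $dQ^u/dQ$ is designed precisely so that conditional expectations under $Q^u$ reduce to ratios of quantities appearing in \eqref{CF-1}.

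First I would apply the conditional Bayes formula: for any $z\in i\R^d$ and $t\leq u$,
$$
E^{Q^u}\bigl[e^{\,z^{\top}X_u}\bigm|\F_t\bigr]
= \frac{E^Q\bigl[e^{-\int_0^u(r_s+\lambda^Q_s)\,ds}\,e^{\,z^{\top}X_u}\bigm|\F_t\bigr]}
{E^Q\bigl[e^{-\int_0^u(r_s+\lambda^Q_s)\,ds}\bigm|\F_t\bigr]}\,.
$$
Next I would split the integral $\int_0^u = \int_0^t + \int_t^u$ and pull the $\F_t$-measurable factor $\exp(-\int_0^t(r_s+\lambda^Q_s)ds)$ out of both numerator and denominator, where it cancels, leaving
$$
\varphi_t^{Q^u}(z)
= \frac{E^Q\bigl[\exp\bigl(-\int_t^u(r_s+\lambda^Q_s)ds+z^{\top}X_u\bigr)\bigm|\F_t\bigr]}
{E^Q\bigl[\exp\bigl(-\int_t^u(r_s+\lambda^Q_s)ds\bigr)\bigm|\F_t\bigr]}\,.
$$

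Finally I would invoke Proposition \ref{CF} (equation \eqref{CF-1}) twice, with the given $Q\in\Q$. Since $Q\neq P$, the indicator $\ind_{Q\neq P}$ equals $1$, so the exponential discount in \eqref{CF-1} matches the one above. Applying the formula to the numerator with the given $z\in i\R^d\subset\C^m_-\times i\R^{d-m}$ yields $\exp\bigl(\Phi^Q(u-t,z)+\Psi^Q(u-t,z)^{\top}X_t\bigr)$; applying it to the denominator with $z=0$ yields $\exp\bigl(\Phi^Q(u-t,0)+\Psi^Q(u-t,0)^{\top}X_t\bigr)$. Taking the ratio and combining the exponents gives the claimed expression.

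There is no real obstacle: the only thing to be careful about is confirming that $z\in i\R^d$ lies in the admissible domain $\C^m_-\times i\R^{d-m}$ of Proposition \ref{CF} (which it does, since $i\R\subset\C_-$ in the sense that $\Re(iy)=0\in\R_-$), and verifying that $Q^u$ is a well-defined probability measure, which follows because the expectation in its defining denominator equals $\exp(\Phi^Q(u,0)+\Psi^Q(u,0)^{\top}X_0)>0$ by the same application of \eqref{CF-1}.
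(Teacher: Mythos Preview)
Your proof is correct and follows exactly the route the paper indicates: the paper states that the proof ``follows from \eqref{CF-1} and, being analogous to that of Lemma \ref{CF-P}, is omitted,'' and your argument is precisely that analogy --- conditional Bayes' formula, cancellation of the $\F_t$-measurable discount factor, and two applications of \eqref{CF-1} with $z$ and with $0$.
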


By combining Proposition \ref{gen} with Lemma \ref{surv-Q} and Lemma \ref{CF-Q} and using well-known Fourier inversion techniques, we can obtain semi-explicit formulas for a wide range of default-sensitive as well as equity/credit hybrid products. In particular, we now derive valuation formulas for \emph{Call} and \emph{Put} options (issued by a default-free third party) written on the defaultable stock $S$. We denote by $\Pi_{rf}(t,T):=E^Q\bigl[\exp(-\int_t^T\!\!r_sds)\,|\,\G_t\bigr]=E^Q\bigl[\exp(-\int_t^T\!\!r_sds)\,|\,\F_t\,\bigr]$ the arbitrage-free price at time $t\in[0,T]$ of a zero-coupon default-free bond. 

\begin{Cor}\label{CorOption}
For any $t\in[0,T]$ and for any strike price $K>0$ the following hold:
\begin{align}
C_K(t,T) &:= E^Q\Bigl[e^{-\int_t^T\!\!r_sds}(S_T-K)^+\bigr|\,\G_t\Bigr]	\nonumber\\
&= \frac{\Pi(t,T)}{2\pi}\int_{-\infty}^{+\infty}\!
\varphi_t^{Q^T}\!\!(0,\ldots,0,w+iu)\frac{K^{-(w-1+iu)}}{(w+iu)(w-1+iu)}\,du\,,
\label{Call}\\
P_K(t,T) &:= E^Q\Bigl[e^{-\int_t^T\!\!r_sds}(K-S_T)^+\bigr|\,\G_t\Bigr]	\nonumber\\
&= K\left(\Pi_{rf}(t,T)-\Pi(t,T)\right)
+\frac{\Pi(t,T)}{2\pi}\int_{-\infty}^{+\infty}\!
\varphi_t^{Q^T}\!\!(0,\ldots,0,y+iu)\frac{K^{-(y-1+iu)}}{(y+iu)(y-1+iu)}\,du\,.
\label{Put}
\end{align}
for some $w>1$ and $y<0$ such that the system of Riccati ODEs \eqref{Ric} has a unique solution for the initial conditions $z=(0,\ldots,0,w)^{\top}$ and $z=(0,\ldots,0,y)^{\top}$.
\end{Cor}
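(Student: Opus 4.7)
My plan is to reduce both options to expectations of payoffs depending only on $L_T=X_T^d$ under the $T$-survival risk-neutral measure $Q^T$, and then to invoke the Lewis Fourier inversion representation, which directly consumes the closed-form characteristic function of $X_T$ provided by Lemma \ref{CF-Q}. The key structural observation is that, since $S\ind_{\lsi\tau,T\rsi}=0$, on the survival set $\{\tau>T\}$ one has $S_T=\widetilde{S}_T=e^{L_T}=e^{X_T^d}$, while $S_T=0$ on its complement.

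For the Call, since $K>0$ we have $(S_T-K)^+=(e^{L_T}-K)^+\ind_{\{\tau>T\}}$, and applying \eqref{gen-1} with $G(x):=(e^{x_d}-K)^+$ gives $C_K(t,T)=\Pi(t,T)\,E^{Q^T}[(e^{L_T}-K)^+\,|\,\F_t]$. The Lewis identity then asserts that, for any damping parameter $w>1$ for which $E^{Q^T}[e^{wL_T}\,|\,\F_t]<\infty$, direct Fourier inversion applied to the dampened payoff $e^{-wx}(e^x-K)^+$ (which is integrable precisely because $w>1$) yields
$$
E^{Q^T}\bigl[(e^{L_T}-K)^+\,|\,\F_t\bigr]=\frac{1}{2\pi}\int_{-\infty}^{+\infty}\varphi_t^{Q^T}\!(0,\ldots,0,w+iu)\,\frac{K^{-(w-1+iu)}}{(w+iu)(w-1+iu)}\,du,
$$
which is precisely \eqref{Call}. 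For the Put, the payoff does not vanish on $\{\tau\leq T\}$ (there $(K-S_T)^+=K$), so I decompose
$$
(K-S_T)^+=K\,\ind_{\{\tau\leq T\}}+(K-e^{L_T})^+\ind_{\{\tau>T\}}.
$$
The first summand contributes $K(\Pi_{rf}(t,T)-\Pi(t,T))$ by linearity and the definitions of $\Pi_{rf}$ and $\Pi$, while the second is handled exactly as in the Call case, applying \eqref{gen-1} with $G(x)=(K-e^{x_d})^+$ and the analogous Lewis representation with damping $y<0$ (for which the dampened payoff $e^{-yx}(K-e^x)^+$ is automatically integrable since its support is $(-\infty,\log K]$).

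The main technical obstacle is the justification of evaluating $\varphi_t^{Q^T}$ at the complex arguments $(0,\ldots,0,w+iu)^{\top}$ and $(0,\ldots,0,y+iu)^{\top}$: Proposition \ref{CF} only supplies the Riccati system \eqref{Ric} and formula \eqref{CF-1} for $z\in\C^m_-\times i\R^{d-m}$, whereas here $\Re(z_d)=w>1$ or $y<0$ lies outside that domain. This is exactly what the hypothesis of the corollary is designed to take care of: unique solvability of \eqref{Ric} on $[0,T-t]$ with initial conditions $z=(0,\ldots,0,w)^{\top}$ and $z=(0,\ldots,0,y)^{\top}$ implies, by a standard extension-of-characteristic-function argument in the spirit of Prop. 2.5.12 of \cite{Fo2}, that the $\F_t$-conditional MGF of $L_T$ under $Q^T$ is finite and analytic in vertical strips of $\C$ containing $\Re(\cdot)=w$ and $\Re(\cdot)=y$, respectively. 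Once this is granted, the dampened Fourier inversion, the contour shift, and the application of Fubini's theorem are all legitimate, and the remaining manipulations are purely algebraic.
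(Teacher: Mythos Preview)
Your proposal is correct and follows essentially the same route as the paper's proof: reduce to the survival set via $S_T=\ind_{\{\tau>T\}}e^{L_T}$, apply \eqref{gen-1} to pass to an $\F_t$-conditional $Q^T$-expectation, and then invoke the Lewis-type Fourier representation of the payoff together with Fubini's theorem; for the Put you use the same decomposition into a default part yielding $K(\Pi_{rf}(t,T)-\Pi(t,T))$ and a survival part handled analogously with damping $y<0$. Your additional discussion of why $\varphi_t^{Q^T}$ may be evaluated at $(0,\ldots,0,w+iu)$ and $(0,\ldots,0,y+iu)$ is a welcome clarification that the paper leaves implicit in the hypothesis on the Riccati system.
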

\begin{proof}
Observe first that: 
$$
E^Q\Bigl[e^{-\int_t^T\!\!r_sds}(S_T-K)^+\bigr|\,\G_t\Bigr]
= E^Q\Bigl[e^{-\int_t^T\!\!r_sds}\bigl(\tildeS_T-K\bigr)^+\,\ind_{\{\tau>T\}}\bigr|\,\G_t\Bigr]
= \Pi(t,T)E^{Q^T}\!\left[\bigl(e^{L_T}-K\bigr)^+\,\bigr|\,\F_t\right]
$$
where the second equality follows from \eqref{gen-1}. As in \cite{CM1} and \cite{Fi}, Lemma 10.2, it can be shown that:
$$
(e^x-K)^+ = \frac{1}{2\pi}\int_{-\infty}^{+\infty}\!e^{(w+iu)x}\frac{K^{-(w-1+iu)}}{(w+iu)(w-1+iu)}\,du
$$
for some $w>1$. Equation \eqref{Call} then follows by Fubini's theorem (see Cor. 2.5.21 of \cite{Fo2} for more details). Equation \eqref{Put} follows by an analogous computation once we observe that:
$$
E^Q\Bigl[e^{-\int_t^T\!\!r_sds}(K-S_T)^+\bigr|\,\G_t\Bigr]
= E^Q\Bigl[e^{-\int_t^T\!\!r_sds}\bigl(K-\tildeS_T\bigr)^+\,\ind_{\{\tau>T\}}\bigr|\,\G_t\Bigr]
+KE^Q\Bigl[e^{-\int_t^T\!\!r_sds}\left(1-\ind_{\{\tau>T\}}\right)\bigr|\,\G_t\Bigr].
$$
\end{proof}

If the discounted defaultable price process $\exp(-\int_0^{\cdot}\!r_udu)\,S$ is not only a $(Q,\GG)$-local martingale but also a true $(Q,\GG)$-martingale (this is for instance the case for the Heston with jump-to-default model considered in Section \ref{S6}; see \cite{Fo2}, Prop. 2.4.7), then the classical \emph{put-call parity} relation holds between the arbitrage-free prices of Call and Put options (issued by a default-free third party) with the same maturity $T$ and strike price $K$, written on the defaultable stock $S$, for all $t\in[0,T]$:
\be	\label{parity}
C_K(t,T) - P_K(t,T)
= E^Q\Bigl[e^{-\int_t^T\!\!r_sds}S_T\bigr|\,\G_t\Bigr] 
- KE^Q\Bigl[e^{-\int_t^T\!\!r_sds}\bigr|\,\G_t\Bigr]
= S_t - K\,\Pi_{rf}(t,T).
\ee
Note that, if the options are issued by an entity defaulting at $\tau$ (for instance, the defaultable firm itself), then the put-call parity relation \eqref{parity} still holds if the default-free bond $\Pi_{rf}(t,T)$ is replaced with the defaultable bond $\Pi(t,T)$.

\section{An example: the Heston with jump-to-default model}	\label{S6}

In this section, we illustrate some of the essential features of the proposed modeling framework within a simple example, which corresponds to a generalisation of the stochastic volatility model introduced by Heston \cite{He}, here extended by allowing the stock price process to be killed by a \emph{jump-to-default} event, in the spirit of \cite{CS}. 

\subsection{The model}	\label{S6-1}

Using the notation introduced in Section \ref{S2}, we let $d=3$ and $m=2$ and consider the following specification:
\be	\label{Hes-JTD}
A=
\begin{pmatrix} -k & 0 & 0	\\
0 & -k_0 & 0	\\
-1/2 & 0 & 0 \end{pmatrix}
\qquad
b = \begin{pmatrix} k\hat{v} \\ k_0\hat{y} \\ \mu \end{pmatrix}
\qquad
\Sigma=
\begin{pmatrix} \bar{\sigma} & 0 & 0	\\
0 & \sigma_0 & 0	\\
\rho & 0 & \sqrt{1-\rho^2}	\end{pmatrix}
\qquad
R_t = \begin{pmatrix} v_t & 0 & 0 \\ 0 & Y_t & 0 \\ 0 & 0 & v_t \end{pmatrix}
\ee
with $k\hat{v}\geq\bar{\sigma}^2/2$, $k_0\hat{y}\geq\sigma_0^2/2$ and $\rho\in\left[-1,1\right]$. The $P$-intensity $(\lambda^P_t)_{0\leq t \leq T}$ is specified as in equation \eqref{lambda-r}, i.e., we have $\lambda^P_t=\bar{\lambda}^P+\Lambda^P_1v_t+\Lambda^P_2Y_t$, for some $\bar{\lambda}^P,\Lambda^P_1,\Lambda^P_2\in\R_+$ with $\bar{\lambda}^P+\Lambda^P_1+\Lambda^P_2>0$. For simplicity, we assume that $r_t=\bar{r}\in\R_+$ for all $t\in[0,T]$. Note that this specification extends the Heston jump-to-default model considered in \cite{CS} by allowing $\lambda^P_t$ to depend on $v_t$ and on the additional stochastic factor $Y_t$. It can be easily checked that the specification \eqref{Hes-JTD} satisfies Assumption \ref{admissibility} and, due to Proposition \ref{SDE-S}, the defaultable stock price process $S=(S_t)_{0\leq t \leq T}$ has the following dynamics:
\be	\label{Hes-S}
dS_t = S_{t-}\,\bigl(\mu-\lambda^P_t\bigr)\,dt+S_{t-}\!\sqrt{v_t}\left(\rho\,dW^1_t+\sqrt{1-\rho^2}\,dW^3_t\right)-S_{t-}\,dM_t
\ee
where $M=(M_t)_{0\leq t \leq T}$ is the $(P,\GG)$-martingale defined by $M_t:=\ind_{\{\tau\leq t\}}-\int_0^{t\wedge\tau}\!\lambda^P_udu$. We also have:
\be	\label{Hes-v-X}	\begin{aligned}
dv_t &= k(\hat{v}-v_t)dt+\bar{\sigma}\sqrt{v_t}\,dW^1_t\,,	\\
dY_t &= k_0(\hat{y}-Y_t)dt+\sigma_0\sqrt{Y_t}\,dW^2_t\,.
\end{aligned} \ee

\subsection{Risk-neutral measures which preserve the Heston with jump-to-default structure}	\label{S6-2}

By relying on Theorem \ref{measure-change}, we now characterise the family of all risk-neutral measures $Q\in\Q$ which \emph{preserve the Heston with jump-to-default structure}, namely all risk-neutral measures $Q\in\Q$ which leave unchanged the structure of the SDEs \eqref{Hes-S}-\eqref{Hes-v-X} (compare also with \cite{Fo2}, Sect. 2.4.1).

\begin{Lem}	\label{mes-Heston}
A risk-neutral measure $Q\in\Q$ preserves the Heston with jump-to-default structure if and only if $dQ/dP$ admits the representation \eqref{RN} for some $\FF$-adapted processes $\theta=(\theta_t)_{0\leq t \leq T}$ and $\gamma=(\gamma_t)_{0\leq t \leq T}$ of the form \eqref{MPR-1} with $\bar{\theta}\in\R^3$ and $\Theta\in\R^{3\times 3}$ satisfying the following restrictions:
\be	\label{Hes-mes-change}
\hat{\theta} = 
\begin{pmatrix} 
\hat{\theta}_1 \\ 
\hat{\theta}_2 \\ 
\frac{\bar{r}+\bar{\lambda}^Q-\mu-\rho\,\hat{\theta}_1}{\sqrt{1-\rho^2}}
\end{pmatrix}
\qquad\qquad
\Theta =
\begin{pmatrix}
\Theta_{1,1} & 0 & 0	\\
0 & \Theta_{2,2} & 0	\\
\frac{\Lambda^Q_1-\rho\,\Theta_{1,1}}{\sqrt{1-\rho^2}} & \frac{\Lambda^Q_2}{\sqrt{1-\rho^2}} & 0
\end{pmatrix}
\ee
with $\hat{\theta}_1\geq\bar{\sigma}/2-k\hat{v}/\bar{\sigma}$ and $\hat{\theta}_2\geq\sigma_0/2-k_0\hat{y}/\sigma_0$.
\end{Lem}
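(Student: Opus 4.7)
The plan is to specialize Theorem \ref{measure-change} to the Heston with jump-to-default parameters \eqref{Hes-JTD}--\eqref{Hes-v-X} and then impose the additional requirement that the specific functional form of the SDEs for $v, Y$ and $S$ (not merely the affine structure of $(X,\tau)$) be preserved under $Q$.

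First I would read off from \eqref{Hes-JTD} the parameters $\alpha, \beta$ entering $R_t$: one has $\alpha = 0$ while $\beta$ has only three nonzero entries, namely $\beta_{1,1} = \beta_{2,2} = \beta_{1,3} = 1$. Plugging these into the definitions of $\bar{s}, \mu_1, \mu_2, \eta_1, \bar{\eta}_1, \sigma$ from Proposition \ref{SDE-S} gives $\bar{s} = \mu$, $\mu_1 = \mu_2 = \eta_1 = \bar{\eta}_1 = 0$ and $\sigma = \rho$. Since $\Sigma_{3,2} = 0$ and $\Sigma_{3,3} = \sqrt{1-\rho^2}$, the drift equation \eqref{MPR-3} collapses on $\{\tau > t\}$ to
\begin{equation*}
\mu + \rho\sqrt{v_t}\,\theta^1_t + \sqrt{1-\rho^2}\,\sqrt{v_t}\,\theta^3_t = \bar{r} + \lambda^Q_t.
\end{equation*}

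For the ``if'' direction I would assume $\hat{\theta}, \Theta$ take the form \eqref{Hes-mes-change} and verify directly the three ingredients demanded by Theorem \ref{measure-change}: conditions (i) and (ii) reduce respectively to the stated Feller-type inequalities on $\hat{\theta}_1, \hat{\theta}_2$ and to the vanishing of the relevant entries of $\Sigma\Theta$ (automatic from the zero pattern of $\Theta$); the drift equation above holds by construction, as $\hat{\theta}_3$ and the third row of $\Theta$ in \eqref{Hes-mes-change} are calibrated precisely to match the coefficients of $1, v_t, Y_t, L_t$. Theorem \ref{measure-change} then yields $Q \in \Q$, and a short Girsanov computation on \eqref{Hes-v-X}--\eqref{Hes-S} confirms that the $Q$-dynamics again have the Heston with jump-to-default form, with modified parameters $k^Q = k - \bar{\sigma}\Theta_{1,1}$, $k^Q \hat{v}^Q = k\hat{v} + \bar{\sigma}\hat{\theta}_1$ and analogous formulas for $k_0^Q, \hat{y}^Q$.

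For the converse, assume $Q \in \Q$ preserves the Heston with jump-to-default structure. Theorem \ref{measure-change} already provides $\hat{\theta}, \Theta$ satisfying (i), (ii) and the simplified drift equation above. To pin down their entries, I would Girsanov-transform \eqref{Hes-v-X}: the $Q$-drift of $v$ becomes $k(\hat{v}-v_t) + \bar{\sigma}(\hat{\theta}_1 + \Theta_{1,1}v_t + \Theta_{1,2}Y_t + \Theta_{1,3}L_t)$, and requiring this to have the one-factor Heston form forces $\Theta_{1,2} = \Theta_{1,3} = 0$, while the Feller admissibility $k\hat{v} + \bar{\sigma}\hat{\theta}_1 \geq \bar{\sigma}^2/2$ gives the lower bound on $\hat{\theta}_1$. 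The analogous argument applied to $Y$ yields $\Theta_{2,1} = \Theta_{2,3} = 0$ and the bound on $\hat{\theta}_2$. Substituting the now-simplified $\theta^1_t, \theta^3_t$ into the drift equation and matching the coefficients of the functions $1, v_t, Y_t, L_t$ produces the four identities giving $\hat{\theta}_3$ and the third row of $\Theta$ displayed in \eqref{Hes-mes-change}; the coefficient of $L_t$ in particular vanishes, yielding $\Theta_{3,3} = 0$.

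The proof is essentially a bookkeeping exercise rather than a conceptually new argument. The main subtlety is that the restrictions on $\hat{\theta}$ and $\Theta$ arise from three distinct sources---admissibility of the $Q$-parameters via Theorem \ref{measure-change}(i)-(ii), structural preservation of the CIR-type SDEs for $v$ and $Y$, and the risk-neutral drift equation for $S$---and one must check that these decouple cleanly into the block pattern of \eqref{Hes-mes-change}. The coefficient-matching step also tacitly relies on the $P$-a.s.\ linear independence on $\{\tau > t\}$ of the affine functions $1, v_t, Y_t, L_t$ of the path of $X$, which follows from the non-degeneracy of the affine diffusion.
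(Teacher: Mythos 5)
Your proposal is correct and follows essentially the same route as the paper, which simply invokes conditions \eqref{MPR-1}--\eqref{MPR-3} of Theorem \ref{measure-change} and observes that preserving the Heston with jump-to-default structure amounts to the extra restriction $\Theta_{1,2}=\Theta_{2,1}=0$ (the entries $\Theta_{1,3}=\Theta_{2,3}=0$ and the bounds on $\hat{\theta}_1,\hat{\theta}_2$ already being forced by conditions (i)--(ii), and the third row of $\hat{\theta},\Theta$ by coefficient matching in \eqref{MPR-3}). Your computations of $\bar{s}=\mu$, $\mu_1=\mu_2=\eta_1=\bar{\eta}_1=0$, $\sigma=\rho$ and the resulting reduced drift equation are all accurate; you have merely written out the bookkeeping the paper leaves implicit.
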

\begin{proof}
The result follows from conditions \eqref{MPR-1}-\eqref{MPR-3} of Theorem \ref{measure-change}, noting that the preservation of the Heston with jump-to-default structure consists in the additional restriction $\Theta_{1,2}=\Theta_{2,1}=0$.
\end{proof}

\begin{Rem}
The parameter restrictions of Lemma \ref{mes-Heston} are significantly weaker than typical parameter restrictions found in the literature. For instance, let us consider the simpler default-free case (i.e., $\tau=+\infty$ $P$-a.s.) without the additional stochastic factor $Y$. In that case, the model \eqref{Hes-JTD}-\eqref{Hes-v-X} reduces to the classical (default-free) Heston \cite{He} stochastic volatility model. In their analysis of the existence of risk-neutral measures in stochastic volatility models, \cite{WH} show that there exists a risk-neutral measure $Q$ (preserving the Heston structure) if $\hat{\theta}_1=0$ and $\Theta_{1,1}\geq-k/\bar{\sigma}$ (see \cite{WH}, Thm. 3.5). In Lemma \ref{mes-Heston}, we show that such a risk-neutral measure exists without any restriction on $\Theta_{1,1}$ and also for non-trivial values of $\hat{\theta}_1$.
\end{Rem}

The main benefit of working with risk-neutral measures which preserve the Heston with jump-to-default structure consists in the possibility of obtaining closed-form solutions to the system of Riccati ODEs \eqref{Ric}, as shown in the next lemma (see also Remark \ref{grass}), which follows from Lemma 10.12 of \cite{Fi} by means of simple (but tedious and, hence, omitted) computations.

\begin{Lem}	\label{explicit-Ric}
Let $Q\in\Q$ be a risk-neutral measure which preserves the Heston with jump-to-default structure. Then the system of Riccati ODEs \eqref{Ric} admits the following solution, for all $z\in\C^2_-\times i\R$:
\begin{footnotesize}
$$	\begin{aligned}
\Psi_1^Q(t,z)
&= -\frac{\bigl(z_3-z_3^2+2\Lambda_1^Q(1-z_3)\bigr)\bigl(e^{\sqrt{\Delta_1}t}-1\bigr)
-\Bigl(\sqrt{\Delta_1}\bigl(e^{\sqrt{\Delta_1}t}+1\bigr)
+\bigl(\bar{\sigma}\,(\Theta_{1,1}+\rho z_3)-k\bigr)\bigl(e^{\sqrt{\Delta_1}t}-1\bigr)\Bigr)\,z_1}
{\sqrt{\Delta_1}\bigl(e^{\sqrt{\Delta_1}t}+1\bigr)
-\bigl(\bar{\sigma}\,(\Theta_{1,1}+\rho z_3)-k\bigr)\bigl(e^{\sqrt{\Delta_1}t}-1\bigr)
-\bar{\sigma}^2\bigl(e^{\sqrt{\Delta_1}t}-1\bigr)\,z_1}	\\
\Psi_2^Q(t,z)
&= -\frac{2\Lambda_2^Q(1-z_3)\bigl(e^{\sqrt{\Delta_2}t}-1\bigr)
-\Bigl(\sqrt{\Delta_2}\bigl(e^{\sqrt{\Delta_2}t}+1\bigr)
+(\sigma_0\,\Theta_{2,2}-k_0)\bigl(e^{\sqrt{\Delta_2}t}-1\bigr)\Bigr)\,z_2}
{\sqrt{\Delta_2}\bigl(e^{\sqrt{\Delta_2}t}+1\bigr)
-\bigl(\sigma_0\Theta_{2,2}-k_0\bigr)\bigl(e^{\sqrt{\Delta_2}t}-1\bigr)
-\sigma_0^2\bigl(e^{\sqrt{\Delta_2}t}-1\bigr)\,z_2}	\\
\Psi_3^Q(t,z) &= z_3	\\
\Phi^Q(t,z) &= 
\frac{2(k\hat{v}+\bar{\sigma}\hat{\theta}_1)}{\bar{\sigma}^2}
\log\left(\frac{2\sqrt{\Delta_1}\exp\,\Bigl(\frac{\sqrt{\Delta_1}-(\bar{\sigma}(\Theta_{1,1}+\rho z_3)-k)}{2}t\Bigr)}
{\sqrt{\Delta_1}(e^{\sqrt{\Delta_1}t}+1)-(\bar{\sigma}(\Theta_{1,1}+\rho z_3)-k)(e^{\sqrt{\Delta_1}t}-1)-\bar{\sigma}^2(e^{\sqrt{\Delta_1}t}-1)z_1}\right)	\\
& +\frac{2(k_0\hat{y}+\sigma_0\hat{\theta}_2)}{\sigma_0^2}
\log\left(\frac{2\sqrt{\Delta_2}\exp\,\Bigl(\frac{\sqrt{\Delta_2}-(\sigma_0\Theta_{2,2}-k_0)}{2}t\Bigr)}{\sqrt{\Delta_2}(e^{\sqrt{\Delta_2}t}+1)-(\sigma_0\Theta_{2,2}-k_0)(e^{\sqrt{\Delta_2}t}-1)-\sigma_0^2(e^{\sqrt{\Delta_2}t}-1)z_2}\right)	\\
&+\bigl(r+\bar{\lambda}^Q\bigr)(z_3-1)t
\end{aligned} $$
\end{footnotesize}
where:
$$
\Delta_1:=\bigl(\bar{\sigma}\,(\Theta_{1,1}+\rho z_3)-k\bigr)^2
+\bar{\sigma}^2\bigl(z_3-z_3^2+2\Lambda_1^Q(1-z_3)\bigr),
$$
$$
\Delta_2:=\bigl(\sigma_0\,\Theta_{2,2}-k_0\bigr)^2+2\,\sigma_0^2\,\Lambda_2^Q(1-z_3).
$$
\end{Lem}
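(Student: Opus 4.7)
The plan is to exploit the block structure of the model \eqref{Hes-JTD} to reduce the Riccati system \eqref{Ric} to two uncoupled scalar Riccati equations plus one quadrature, and then solve each in closed form.

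First I would write out \eqref{Ric} explicitly for the Heston with jump-to-default parameters, using the measure-change constraints \eqref{Hes-mes-change}. Since here $J=\{3\}$ and $A^Q_{3,3}$ is annihilated by the structural restrictions (the $(3,3)$ entry of $A^Q$ vanishes because $A_{3,3}=0$ and $\Theta_{3,3}=0$ by Lemma \ref{mes-Heston}), the ODE for $\Psi^Q_3$ reduces to $\partial_t\Psi^Q_3(t,z)=0$ with initial condition $z_3$, giving at once $\Psi^Q_3(t,z)=z_3$. Moreover, the off-diagonal block structure ($A^Q_{1,2}=A^Q_{2,1}=0$, $\Theta_{1,2}=\Theta_{2,1}=0$, and $\Sigma$ lower-triangular with $\Sigma_{1,2}=\Sigma_{2,1}=\Sigma_{2,3}=0$) means that $\Psi^Q_1$ and $\Psi^Q_2$ obey independent scalar Riccati ODEs once $\Psi^Q_3$ is replaced by the constant $z_3$.

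Second, I would solve each scalar Riccati equation. After substitution, the equation for $\Psi^Q_1$ takes the form
\[
\partial_t\Psi^Q_1 = \tfrac{\bar{\sigma}^2}{2}(\Psi^Q_1)^2 + \bigl(\bar{\sigma}(\Theta_{1,1}+\rho z_3)-k\bigr)\Psi^Q_1 + \tfrac{1}{2}\bigl(z_3-z_3^2 + 2\Lambda^Q_1(1-z_3)\bigr),
\]
with $\Psi^Q_1(0,z)=z_1$, and the quantity under the square root in the discriminant of the associated quadratic is precisely $\Delta_1$. The same procedure, with $\sigma_0,k_0,\Theta_{2,2},\Lambda^Q_2$ in place of $\bar\sigma,k,\Theta_{1,1}+\rho z_3,\Lambda^Q_1$ and with no $z_3$--$z_3^2$ term (since $L$ feeds only into component $1$ through $\rho$), yields the analogous formula with $\Delta_2$. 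These scalar Riccatis I would solve by the standard device of linearising via $\Psi = -2u'/(\bar{\sigma}^2 u)$, obtaining a linear second-order constant-coefficient ODE whose characteristic roots are $\tfrac{1}{2}(-(\bar\sigma(\Theta_{1,1}+\rho z_3)-k)\pm\sqrt{\Delta_1})$; matching the initial condition $\Psi^Q_1(0,z)=z_1$ gives the stated quotient of $e^{\sqrt{\Delta_1}t}\pm 1$ expressions. This is the step where, as the author warns, the bookkeeping is tedious but routine.

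Finally, I would recover $\Phi^Q$ by direct integration of the first equation of \eqref{Ric}. Since $\alpha=0$ in the specification \eqref{Hes-JTD}, the quadratic term in $\Sigma^\top\Psi^Q$ drops and the ODE reads
\[
\partial_t\Phi^Q(t,z) = (k\hat v+\bar\sigma\hat\theta_1)\Psi^Q_1(t,z) + (k_0\hat y+\sigma_0\hat\theta_2)\Psi^Q_2(t,z) + \mu\,z_3 - \bar\lambda^Q - \bar r,
\]
where I have used $b^Q_1 = k\hat v + \bar\sigma\hat\theta_1$, $b^Q_2 = k_0\hat y + \sigma_0\hat\theta_2$ and $b^Q_3 = \mu$ (the latter absorbed against $(r+\bar\lambda^Q)(z_3-1)$ after using the drift relation \eqref{MPR-3}, i.e.\ $\mu = r+\bar\lambda^Q$ modulo the risk-premium adjustments already built into \eqref{Hes-mes-change}). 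The integrals $\int_0^t \Psi^Q_i(s,z)\,ds$ are of the form $\int (P'/P)\,ds$ for the denominator $P$ of the closed-form expression for $\Psi^Q_i$, and thus produce the logarithmic terms in the stated formula; the constant-in-$s$ remainder yields the final affine term $(r+\bar\lambda^Q)(z_3-1)t$, and $\Phi^Q(0,z)=0$ fixes the normalization inside the logarithm.

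The main obstacle is not conceptual but algebraic: verifying that the particular presentation of $\Psi^Q_i$ and $\Phi^Q$ given in the statement (with the precise combinations of $\sqrt{\Delta_i}(e^{\sqrt{\Delta_i}t}\pm 1)$ in numerator and denominator, and the specific exponential prefactor inside the log) coincides with the general formula produced by the linearisation/integration. This is a careful matching of signs and of the initial condition rather than any substantive difficulty, which is why we can simply invoke Lemma 10.12 of \cite{Fi} and omit the details.
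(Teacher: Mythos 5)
Your strategy coincides with what the paper actually does: the published ``proof'' consists of the single remark that the result follows from Lemma 10.12 of \cite{Fi} by simple but tedious (and omitted) computations, and your reduction --- decoupling via the block structure, $\Psi^Q_3\equiv z_3$, two uncoupled scalar Riccati equations solved by linearisation, and a quadrature for $\Phi^Q$ --- is precisely that computation. One sign needs fixing: writing out the equation for $\Psi^Q_1$ with $A^Q=A+\Sigma\Theta$ and $\Psi^Q_3=z_3$ gives the constant term $\tfrac12 z_3^2+(\Lambda^Q_1-\tfrac12)z_3-\Lambda^Q_1=-\tfrac12\bigl(z_3-z_3^2+2\Lambda^Q_1(1-z_3)\bigr)$, i.e.\ the \emph{negative} of what you displayed (and similarly $-\Lambda^Q_2(1-z_3)$ for $\Psi^Q_2$); only with this sign does the discriminant of the associated quadratic equal $\Delta_1$ as you claim, whereas the ODE as you wrote it would produce $\bigl(\bar{\sigma}(\Theta_{1,1}+\rho z_3)-k\bigr)^2-\bar\sigma^2\bigl(z_3-z_3^2+2\Lambda^Q_1(1-z_3)\bigr)$ under the square root. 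A second small point: $b^Q_3$ is not $\mu$ but $\mu+(\Sigma\hat\theta)_3=\bar r+\bar\lambda^Q$, by the third component of $\hat\theta$ in \eqref{Hes-mes-change}, and this is exactly how the term $(r+\bar\lambda^Q)(z_3-1)t$ arises after subtracting $\bar\lambda^Q+\bar r$; your parenthetical ``absorbed against'' gestures at this but should be made explicit.
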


By combining the above lemma with the results of Sections \ref{S4}-\ref{S5}, we can efficiently solve risk management problems and compute arbitrage-free prices of general default-sensitive payoffs.

\begin{Rem}	\label{grass}
In the context of the model \eqref{Hes-JTD}-\eqref{Hes-v-X}, it may seem simplistic to restrict the attention to the set of risk-neutral measures which preserve the Heston with jump-to-default structure, i.e., to the set of risk premia processes $\theta=(\theta_t)_{0\leq t \leq T}$ which satisfy the restriction $\Theta_{1,2}=\Theta_{2,1}=0$ (see the proof of Lemma \ref{mes-Heston}). 
However, due to Theorem 4.1 of \cite{GT}, the system of Riccati ODEs \eqref{Ric} for the model \eqref{Hes-JTD}-\eqref{Hes-v-X} admits an explicit solution \emph{if and only if} $\Theta_{1,2}=\Theta_{2,1}=0$. In other words, the set of risk-neutral measures which preserve the Heston with jump-to-default structure characterised in Lemma \ref{mes-Heston} coincides with the set of risk-neutral measures under which system \eqref{Ric} admits a closed-form solution, which is given in Lemma \ref{explicit-Ric}. Of course, by relying on Theorem \ref{measure-change}, we can relax the requirement of the preservation of the Heston with jump-to-default structure with the weaker requirement of the preservation of the affine structure of $(X,\tau)$ but, in that case, one has to rely on numerical techniques for solving the Riccati system \eqref{Ric}.
\end{Rem}

\subsection{Numerical results}	\label{S6-3}

This section reports the results of some numerical experiments for the Heston with jump-to-default model \eqref{Hes-JTD}-\eqref{Hes-v-X}. We adopt the following parameters' specification: $k=0.565$, $\hat{v}=0.07$, $\bar{\sigma}=0.281$, $k_0=0.325$, $\hat{y}=0.003$, $\sigma_0=0.036$, $\mu=0.1$, $\rho=-0.558$. These values have been obtained in \cite{CWu} by calibrating (via filtering and maximum likelihood techniques) an analogous stochastic volatility jump-to-default model to market quotes of equity options and CDS spreads on the Citigroup company (period: 5/2002 - 5/2006). The remaining parameters appearing in \eqref{Hes-mes-change} are specified as $\bar{r}=0$, $\Theta_{1,1}=\Theta_{2,2}=0.002$, $\hat{\theta}_1=\hat{\theta}_2=0.001$ and $\Lambda_1^P=\Lambda_2^P=\bar{\lambda}^P=0.1225$.

As a first application, we compute the distribution function of the defaultable stock price $S_T$ in the case of survival. More specifically, we consider the model \eqref{Hes-JTD}-\eqref{Hes-v-X} under the physical probability measure $P$ and, by relying on formula \eqref{distr-fct-1} together with Lemma \ref{explicit-Ric}, we compute the surface $(T,x)\mapsto P\left(S_T\leq x , \tau > T\right)$, for $T\in[0.5,3.0]$ and $x\in[0.7,1.3]$, for $S_0=1$. Note that, from the computational point of view, this is an easy task in our modeling framework, since it only requires a one-dimensional numerical integration.
As can be observed from Figure \ref{fig:figure1}, the shape of the distribution function strongly depends on the time horizon $T$, with a distinct behavior for small and large values of $x$, due to the combined effects of diffusive and jump-type risks. Figure \ref{fig:distrPQ} shows that the distribution function of the defaultable stock price can be quite different under the physical and a risk-neutral probability measure, even in the case where the overall default probability is kept at the same level (i.e., we have $P(\tau\leq T)=Q(\tau\leq T)=0.4$), thus accounting for risk-aversion and providing an evidence of the flexibility induced by the possibility of changing the default intensity from the physical to a risk-neutral probability measure (to this regard, compare also the discussion preceding Remark \ref{risk-premia}).

\begin{figure}[ht]
\begin{minipage}[b]{0.45\linewidth}
\centering
\includegraphics[height=2.25in,width=2.75in,angle=0]{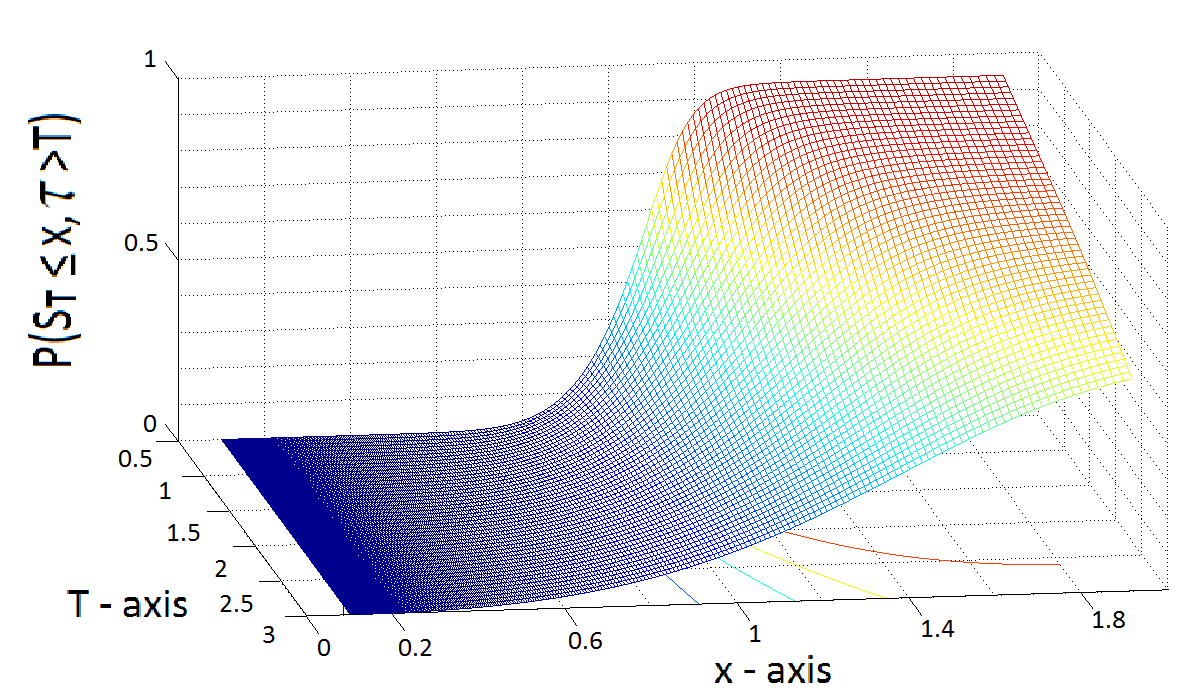}    
\caption{\footnotesize Surface $(T,x)\mapsto P\left(S_T\!\leq\!x,\tau\!>\!T\right)$ for the Heston with Jump-to-Default model.}
\label{fig:figure1}
\end{minipage}
\hspace{0.5cm}
\begin{minipage}[b]{0.45\linewidth}
\centering
\includegraphics[height=2in,width=2.75in,angle=0]{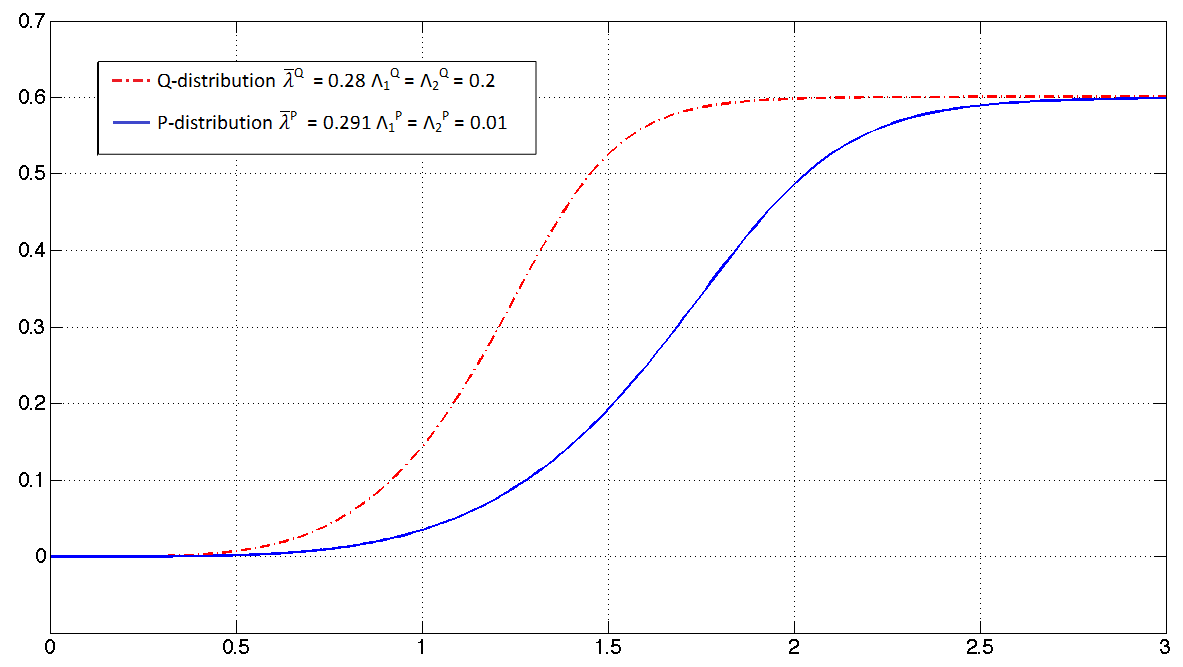}  
\caption{\footnotesize Distribution functions under physical ($P$) and risk-neutral ($Q$) probability measures.}
\label{fig:distrPQ}
\end{minipage}
\end{figure}

As a second application, we show the implied volatility surface generated by the model \eqref{Hes-JTD}-\eqref{Hes-v-X}. To this effect, we first compute a matrix of prices $P_K(0,T)$ of Put options on the defaultable stock $S_T$, issued by a default-free third party, with maturity $T\in[0.5,3.0]$ and moneyness $K/S_0\in[0.7,1.3]$, letting $\bar{\lambda}^Q=0.001$ and $\Lambda^Q_i=\Lambda^P_i$, for $i=1,2$. The computation is performed via the Fast Fourier Transform method of \cite{CM1}, by relying on Corollary \ref{CorOption} and Lemma \ref{explicit-Ric}. The corresponding implied volatilities are then computed by using the \textit{blsimpv} function in Matlab$^{\copyright}$ (R2012a 64-bit version).

\begin{figure}[ht]
\begin{minipage}[b]{0.45\linewidth}
\centering
\includegraphics[height=2.25in,width=2.75in,angle=0]{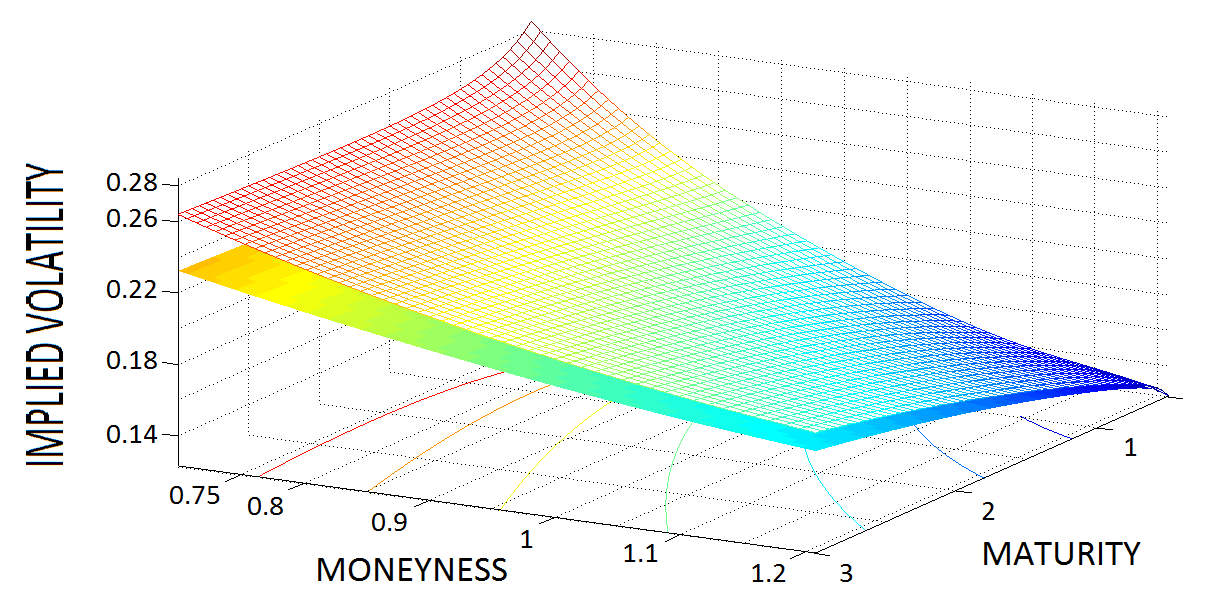}  
\caption{\footnotesize Implied volatility surfaces: standard Heston (solid) and Heston + Jump-to-Default (mesh).}
\label{fig:figure2}
\end{minipage}
\hspace{0.5cm}
\begin{minipage}[b]{0.45\linewidth}
\centering
\includegraphics[height=2in,width=2.8in,angle=0]{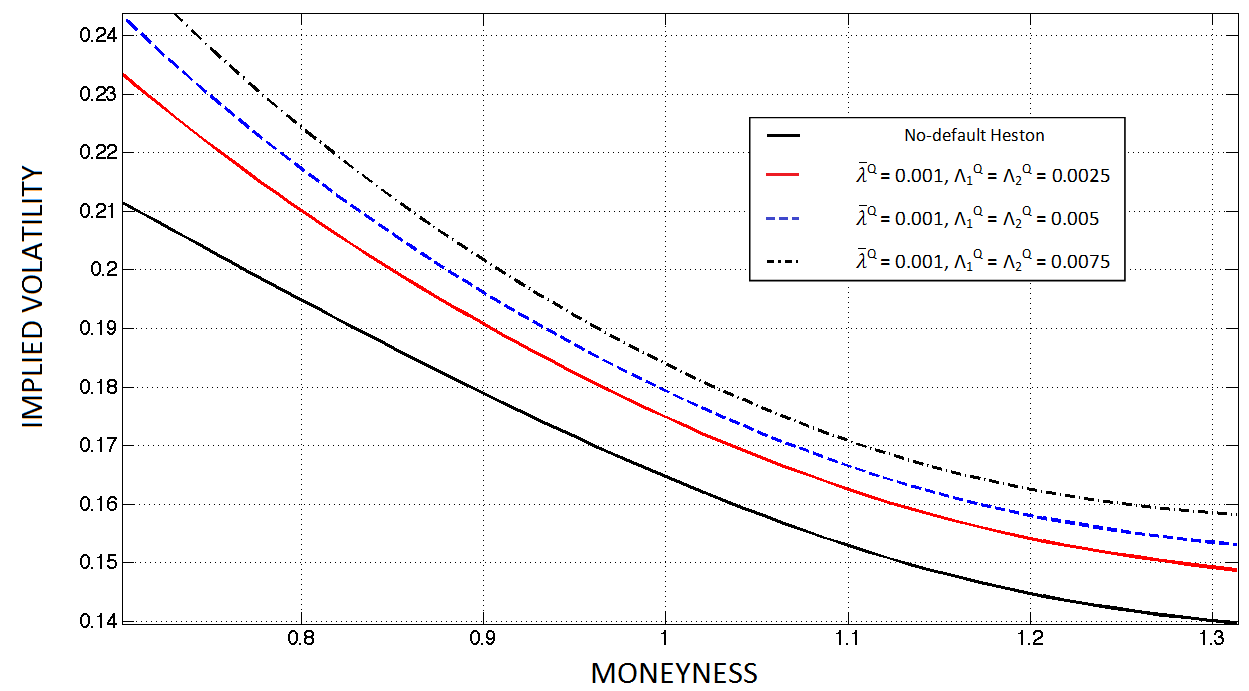}  
\caption{\footnotesize Implied volatility skew for different $Q$-intensities (for $T=1.75$).}
\label{fig:skew}
\end{minipage}
\end{figure}

Figure \ref{fig:figure2} compares the implied volatility surface generated by the model \eqref{Hes-JTD}-\eqref{Hes-v-X} with the implied volatility surface obtained from a standard (default-free) Heston \cite{He} model, i.e., by letting $\bar{\lambda}^Q\!=\!\Lambda^Q_1\!=\!\Lambda^Q_2\!=\!0$. It is evident that the introduction of default risk (through a jump-to-default) increases the implied volatility along all maturities and strikes. The increase is more pronounced for deep out-of-the-money options, due to the possibility of obtaining $K$ in the case of default (compare also with equation \eqref{Put}), thus confirming the fact that default risk is the main responsible for the value of out-of-the-money put options with short maturities. There is also a strong skew effect, which tends to flatten as the maturity increases but is always more significant than in the default-free case. 
The impact of default risk is also shown in Figure \ref{fig:skew}, which depicts the implied volatility skew for different specifications of the parameters which determine the default intensity $\lambda^Q$ together with the skew generated by a standard default-free Heston model, for the fixed maturity $T=1.75$. As expected, the implied volatility skew is more pronounced for a higher risk of default as measured by larger values of the default intensity parameters.

\section{Conclusions and further developments}	\label{S7}

We have proposed a general framework based on an affine process $X$ and on a doubly stochastic random time $\tau$ for the modeling of a defaultable stock. This approach allows to jointly model equity and credit risk, together with stochastic volatility and stochastic interest rate. Moreover, analytical tractability is ensured under both the physical and a set of risk-neutral probability measures, thanks to a flexible characterisation of all risk-neutral measures which preserve the affine structure of $(X,\tau)$. 
 
In the present paper, we have chosen to specify the driving process $X$ as an affine diffusion on $\R^m_{++}\!\times\R^{d-m}$, for some $m\in\{1,\ldots,d-1\}$. However, our techniques can be easily adapted to the more general case where $X$ is a continuous matrix-valued affine process (e.g., a Wishart process), as recently considered e.g. in \cite{CFMT}. We also want to mention that the characterisation of risk-neutral measures which preserve the affine structure of $(X,\tau)$ provided in Theorem \ref{measure-change} (or in Lemma \ref{mes-Heston} for the more specific case of the Heston with jump-to-default model) can also be useful in insurance mathematics for the valuation of mortality-indexed insurance contracts in the context of intensity-based mortality models (see e.g. \cite{Bif}).

\section*{Acknowledgements}
\vspace{-0.15cm}
\noindent 
The authors are thankful to Monique Jeanblanc and Wolfgang J. Runggaldier as well as to two anonymous referees for useful comments which helped to improve the paper. This research benefited from the support of the \emph{Chaire Risque de Crédit}, Fédération Bancaire Française.

\end{document}